\newcommand{\PAPER}[1]{#1}
\newcommand{\SIAMV}[1]{}
\title{\Large Derandomizing Pseudopolynomial Algorithms for Subset Sum}
\author{Timothy M. Chan\thanks{
Siebel School of Computing and Data Science, University of Illinois at Urbana--Champaign, USA (\email{tmc@illinois.edu})
}}
\date{}
\newcommand{\mysection}[1]{\section{#1.\ }}
\newcommand{\mysubsection}[1]{\subsection{#1.\ }}
\newcommand{\myparagraph}[1]{\medskip\paragraph{#1}}
\title{Derandomizing Pseudopolynomial Algorithms for Subset Sum}
\author{Timothy M. Chan\thanks{
Siebel School of Computing and Data Science, University of Illinois at Urbana--Champaign, USA (tmc@illinois.edu)
}}
\newtheorem{lemma}{Lemma}[section]
\newtheorem{theorem}[lemma]{Theorem}
\newtheorem{corollary}[lemma]{Corollary}
\newtheorem{definition}[lemma]{Definition}
\newtheorem{observation}[lemma]{Observation}
\newcommand{\mysection}[1]{\section{#1}}
\newcommand{\mysubsection}[1]{\subsection{#1}}
\newcommand{\myparagraph}[1]{\paragraph{#1}}
\newcommand{\SSS}{\mathcal{C}}
\newcommand{\hatS}{\hat{S}}
\newcommand{\OO}{\widetilde{O}}
\newcommand{\eps}{\varepsilon}
\newcommand{\SUM}{\sigma}
\newcommand{\SUMS}{\Sigma}
\newcommand{\ANS}{\mbox{\sc out}}
\begin{document}

\maketitle

\begin{abstract}
We reexamine the classical \emph{subset sum} problem: given a set $X$ of $n$ positive integers and a number~$t$, decide whether there exists a subset of $X$ that sums to $t$; or more generally, compute the set $\textsc{out}$ of all numbers $y\in\{0,\ldots,t\}$ for which there exists a subset of $X$ that sums to $y$. Standard dynamic programming solves the problem in $O(tn)$ time. In SODA'17, two papers appeared giving the current best deterministic and randomized algorithms, ignoring polylogarithmic factors: Koiliaris and Xu's deterministic algorithm runs in $\widetilde{O}(t\sqrt{n})$ time, while Bringmann's randomized algorithm runs in $\widetilde{O}(t)$ time. We present the first deterministic algorithm running in $\widetilde{O}(t)$ time.

Our technique has a number of other applications: for example, we can also derandomize the more recent output-sensitive algorithms by Bringmann and Nakos [STOC'20] and Bringmann, Fischer, and Nakos [SODA'25] running in $\OO(|\textsc{out}|^{4/3})$ and $\OO(|\textsc{out}|\sqrt{n})$ time, and we can derandomize a previous fine-grained reduction from 0-1 knapsack to min-plus convolution by Cygan et al.~[ICALP'17].
\end{abstract}

\mysection{Introduction}

This paper studies one of the textbook problems in algorithm design, the \emph{subset sum} problem:
given a (multi)set $X$ of $n$ positive integers and a target number $t$, decide whether there
exists a subset of $X$ that sums to~$t$.  The problem is weakly NP-complete.  
Standard algorithms achieve roughly $O(2^{n/2})$ time (by the meet-in-the-middle
trick~\cite{HorowitzS74}) and $O(tn)$ time (by dynamic programming~\cite{BellmanBOOK}).  Although it has remained open whether the exponential bound
$2^{n/2}$ could be improved in the worst case, the pseudopolynomial bound $O(tn)$ has been improved: in two
concurrent SODA'17 papers,
Koiliaris and Xu~\cite{KoiliarisX19} described a deterministic $\OO(t\sqrt{n})$-time algorithm,\footnote{
Throughout this paper, the $\OO$ notation hides factors polylogarithmic in $n$ and $t$.
} and
Bringmann~\cite{Bringmann17} described a randomized (Monte Carlo) $\OO(t)$-time algorithm.
Another elegant randomized (Monte Carlo) $\OO(t)$-time algorithm was found by Jin and Wu~\cite{JinW19}.

All of these pseudopolynomial algorithms can solve a stronger ``all-targets'' version of the problem:
compute $\ANS = \{y\in\{0,\ldots,t\}: \mbox{there exists a subset of $X$ that sums to $y$}\}$.
They can also compute a representation of a feasible subset for each $y\in\ANS$
(so that a subset may subsequently be reported in $\OO(1)$ time per element).
For the all-targets version of the problem, the randomized $\OO(t)$-time algorithms are clearly near optimal since the output size is $\Omega(t)$ in the worst case.  For the original version of the problem,
Abboud, Bringmann, Hermelin, and Shabtay~\cite{AbboudBHS22} also ruled out algorithms with time bounds
of the form $O(t^{1-\eps} 2^{o(n)})$ for any constant $\eps>0$, assuming SETH (see also \cite{CyganDLMNOPSW16} for an earlier 
conditional lower bound under a different hypothesis).

In recent years, numerous other algorithms have been proposed for
the subset sum problem that take into account different parameters (e.g., the output size $|\ANS|$
\cite{BringmannN20,BringmannFN25}, the largest input value $\max(X)$ \cite{PolakRW21,ChenLMZ24a,ChenLMZ24},
the density~\cite{BringmannW21}, space usage \cite{LokshtanovN10,Bringmann17,JinVW21,NederlofW21,BelovaCKM24}, \ldots).
Many variants have also been explored (e.g., unbounded subset sum~\cite{Bringmann17}, modular subset sum~\cite{AxiotisBJTW19,AxiotisBBJNTW21,CardinalI21}, approximate subset sum~\cite{KellererMPS03,ChenLMZ24a}, \ldots),
and the problem is connected to other fundamental problems such as knapsack and coin changing, forming
a very active topic of current research.

\myparagraph{Derandomization?}
The results in the two SODA'17 papers naturally raise 
the question of whether there exists a \emph{deterministic} $\OO(t)$-time algorithm for subset sum, and 
this question has remained open for the last 8 years.
Bringmann's randomized algorithm uses a color-coding technique~\cite{AlonYZ95}, and although
some past algorithms based on color-coding have been successfully derandomized (e.g., for computing $k$-simple
paths/cycles in graphs), the overhead for derandomizing color-coding seems too big for the application to
subset sum (our desired $\OO(t)$ bound doesn't leave much room to absorb the overhead).  If the number of elements $k$ allowed in a sum
is sublogarithmic, the cost of derandomization is tolerable, but not when $k$ is large.
It is unclear how to
apply standard techniques such as the method of limited independence or the method of conditional probabilities/expectations without blowing up the time bound.
Jin and Wu's alternative randomized algorithm~\cite{JinW19} seems even tougher to derandomize---it computes counts modulo a random prime, and there is no obvious efficient way to guarantee that a correct prime is selected.
Just converting these Monte Carlo randomized algorithms to Las Vegas is already open.
On the other hand, Koiliaris and Xu's deterministic approach~\cite{KoiliarisX19} (see also \cite{KoiliarisX18} for an alternative) seems difficult to improve upon---they did give a variant with $\OO(t^{4/3})$ running time but there is no clear path towards $O(t^{1+\eps})$.  By contrast, for variants such as unbounded subset sum\footnote{In the unbounded subset sum problem, we may use an element multiple times to sum to the target.  This tends to be easier, as much of the challenge in the original problem is in preventing an element from being used more than once.
}~\cite{Bringmann17}
and modular subset sum~\cite{AxiotisBBJNTW21}, efficient derandomization is not as difficult and is already known.

\myparagraph{New algorithm.}
We present the first deterministic algorithm for subset sum running in $\OO(t)$ time.  It also solves the all-targets version of the problem.

Our algorithm may be viewed as a derandomization of a variant of Bringmann's~\cite{Bringmann17}.  Instead of partitioning the input set into multiple disjoint subsets via color-coding, we recursivly partition the input set into two subsets, in such a way that the desired subset for each target is split roughly evenly into two halves in terms of their cardinalities (with about $\sqrt{k}$ additive error if the subset has about $k$ elements).  This reinterpretation of Bringman's algorithm as a divide-and-conquer via ``halvers'' (or similar randomized partitioning) is not new, and has appeared in other subsequent randomized algorithms for subset sum and knapsack (e.g., see \cite{BringmannN20,BringmannDP24,BringmannFN25}).  However, although the existence of a good halver
is guaranteed by the probabilistic method (via standard Chernoff or discrepancy bound), its deterministic construction seems to require knowing the subsets for all targets, i.e., knowing all the answers in the first place!
Our new approach
 is to speed up the halver construction by worsening its quality.\footnote{
The idea is very loosely inspired by work on derandomization in computational geometry concerning
``$\eps$-nets'' and ``$\eps$-approximations'' \cite{ChazelleBOOK,MatousekSURV}.
One specific inspiration is the author's derandomization of a recursive algorithm for linear programming in low dimensions~\cite{Chan18}, which builds a lower-quality $\eps$-net (sufficient for this recursive algorithm) simply by dividing the input set into small subsets and applying a slow $\eps$-net algorithm to each subset.
}  So long as the additive error is kept slightly sublinear (bounded by $k/\log k$), the divide-and-conquer would still be efficient.
To this end, we show that for any subset, there is an equivalent subset (with the same sum) that
can be divided into a sublinear number of small ``canonical'' subsets; we then naively construct a halver with respect to these canonical subsets. The quality of the halver with respect to the original subsets would indeed suffer, but since there are not too many canonical subsets overall and their sizes are small, the construction is cheaper.  But how to generate good canonical subsets?  Our idea is to adapt Koiliaris and Xu's deterministic algorithm for this purpose!  Putting these pieces together in the right way, the whole solution turns out to be relatively simple, 
describable in 3 pages (see Section~\ref{sec:linear}).

\myparagraph{Other consequences.}
Our new idea on halver construction can help derandomize other algorithms, too.  We mention two applications:
\begin{enumerate}
\item Bringmann and Nakos \cite{BringmannN20} and Bringmann, Fischer, and Nakos \cite{BringmannFN25}
recently obtained randomized \emph{output-sensitive} algorithms for the all-targets version of the subset sum problem,
running in $\OO(|\ANS|^{4/3})$ and
$\OO(|\ANS|\sqrt{n})$ time respectively.
(Standard dynamic programming~\cite{BellmanBOOK} can be made to run in $\OO(|\ANS|n)$ time.)
Our technique can derandomize both algorithms with the same time bound up to polylogarithmic factors in $n$ and $t$.
\item Cygan, Mucha, W\k{e}grzycki, and W\l{}odarczyk~\cite{CyganMWW19} gave a randomized reduction showing that if
the min-plus convolution problem for $n$ elements could be solved in $O(n^{2-\eps})$ time, then
the 0-1 knapsack problem for items with positive integer weights bounded by $t$ and positive real profits could be solved
in randomized $\OO(t^{2-\eps'})$ time.  Our technique provides a derandomization of this result.
(Derandomizing reductions have gained more attention recently; e.g., see \cite{ChanH20,ChanX24,FischerKP24}.)
In particular, this implies a deterministic $O(t^2/2^{\Omega(\sqrt{\log t})})$-time algorithm for
0-1 knapsack, by using Chan and Williams' deterministic min-plus matrix multiplication algorithm~\cite{ChanW21}
together with a known reduction from min-plus matrix multiplication to convolution~\cite{BremnerCDEHILPT14}.
\end{enumerate}
The second application above follows immediately by our technique (see Appendix~\ref{sec:knapsack}).  The first application is more interesting, however, and requires additional 
new ideas, as we can no longer apply Koiliaris and Xu's approach to generate canonical subsets (see Section~\ref{sec:os}).

\myparagraph{Notation.}
Let $[u]$ denote $\{0,1,\ldots,u\}$.
For a set $S$ of integers, let $\SUM(S)=\sum_{s\in S} s$.
Define 
\[ \SUMS_i(X) \:=\: \{y: \mbox{$\exists S\subseteq X$
with $|S|=i$ and $\SUM(S)=y$}\}.
\]
Let $\SUMS_{\le k}(X) := \bigcup_{i\in [k]}\SUMS_i(X)$, and $\SUMS(X)=\SUMS_{\le\infty}(X)$.
The all-targets version of the subset sum problem is to compute $\SUMS(X)\cap [t]$ for a given $t$.

\mysection{An $\OO(t)$-Time Algorithm}\label{sec:linear}

In this section, we present a deterministic $\OO(t)$-time algorithm for all-targets subset sum.

\mysubsection{Preliminaries: Review of Koiliaris and Xu's algorithm}

Many previous work on subset sum starts by considering the related problem of computing $\SUMS_{\le k}(X)$ (or a superset) for a given $k$.  Bringmann~\cite{Bringmann17} solved this version of the problem in randomized $\OO(ku)$ time, where $u=\max(X)$, whereas Koiliaris and Xu~\cite{KoiliarisX19} solved it in deterministic $\OO(k^2u)$ time.
We first review Koiliaris and Xu's solution, as we will need it later.
Their approach is a binary divide-and-conquer, using convolutions as a subroutine.  We describe a simplified version below,
running in $\OO(k^3u)$ time, which is good enough for our application; we reinterpret the divide-and-conquer 
in terms of dyadic intervals.


\begin{lemma}\label{lem:KX}
Given a set $X$ of integers in $[u]$ and a number $k$,
we can compute $\SUMS_i(X)$ 
for all $i\in [k]$, in $\OO(k^3 u)$ deterministic time. 

Afterwards, for any given $i\in[k]$ and $y\in \SUMS_i(X)$, we can report one
subset $S\subseteq X$ with $|S|=i$ and $\SUM(S)=y$, in $\OO(k)$ time.
\end{lemma}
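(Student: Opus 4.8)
The plan is to recast Koiliaris and Xu's binary divide-and-conquer as a recursion over dyadic subintervals of the value range $[u]$. Form the balanced binary tree whose leaves are the singletons $\{0\},\{1\},\ldots,\{u\}$ and whose internal nodes are the dyadic subintervals of $[u]$, and for a node $I$ set $X_I:=X\cap I$. Since the two children $I_1,I_2$ of an internal node $I$ partition $I$ (say $I_1$ the lower half), the sets $X_{I_1},X_{I_2}$ are disjoint, so each $S\subseteq X_I$ splits uniquely as $S_1\cup S_2$ with $S_j\subseteq X_{I_j}$, and hence
\[
  \SUMS_i(X_I)\ =\ \bigcup_{i_1+i_2=i}\bigl(\SUMS_{i_1}(X_{I_1})\oplus\SUMS_{i_2}(X_{I_2})\bigr),
\]
where $A\oplus B:=\{a+b:a\in A,\,b\in B\}$. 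I would compute the family $\{\SUMS_i(X_I):i\in[k]\}$ bottom-up: at a leaf it is immediate ($\SUMS_0(\{v\})=\{0\}$, $\SUMS_1(\{v\})=\{v\}$ if $v\in X$, and $\SUMS_i(\{v\})=\emptyset$ for $i\ge2$), and at an internal node I would, for each pair $(i_1,i_2)$ with $i_1+i_2\le k$, form $\SUMS_{i_1}(X_{I_1})\oplus\SUMS_{i_2}(X_{I_2})$ by an FFT-based convolution and merge it into $\SUMS_{i_1+i_2}(X_I)$. The root's family $\SUMS_0(X),\ldots,\SUMS_k(X)$ is the desired output.

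For the running time the key point is a window bound: every sum of $i$ elements of an interval of length $2^s$ lies inside a window of width less than $i\,2^s\le k\,2^s$. Representing each $\SUMS_i(X_I)$ as a bit-vector over that window --- of length $O(k\,2^s)$ at a scale-$s$ node --- the convolution for a pair $(i_1,i_2)$ has inputs of lengths $O(i_1\,2^s)$ and $O(i_2\,2^s)$ and so costs $\OO\bigl((i_1+i_2)\,2^s\bigr)$. Summing over all pairs with $i_1+i_2\le k$ gives $\OO(k^3\,2^s)$ per scale-$s$ node; as there are $O(u/2^s)$ nodes at scale $s$, each of the $O(\log u)$ scales contributes $\OO(k^3 u)$, for a total of $\OO(k^3 u)$. (Putting $X$ into a queryable form beforehand costs $O(u)$ since $X$ is a set; a multiset can first have every multiplicity capped at $k$.)

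For the reporting part, during the bottom-up pass I would additionally store at each internal node $I$, for every $y\in\SUMS_i(X_I)$, one decomposition $(i_1,i_2,y_1,y_2)$ with $i_1+i_2=i$, $y_1+y_2=y$, and $y_j\in\SUMS_{i_j}(X_{I_j})$. Because an FFT only returns the Boolean sumset, I would extract these with the standard deterministic witness-finding routine for Boolean convolutions, which adds only a polylogarithmic factor and so keeps the total at $\OO(k^3 u)$ (and the pointers take only $\OO(k^2 u)$ space). To report a subset for a query $(i,y)$, walk down from the root following these pointers, recursing into each child whose recorded cardinality is positive and emitting the element at each leaf reached; this visits a union of at most $i$ root-to-leaf paths in a depth-$O(\log u)$ tree and thus outputs a valid $S$ in $O(i\log u)=\OO(k)$ time.

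I expect the obstacle to be bookkeeping rather than anything deep: one must verify that every partial sum-set genuinely fits in a window of width $O(k\,2^s)$ --- this is exactly what makes a scale-$s$ node cost $\OO(k^3\,2^s)$ rather than, say, $\OO(k^2 u)$ --- and that when merging, $\SUMS_{i_1}(X_{I_1})\oplus\SUMS_{i_2}(X_{I_2})$ gets placed at the correct offset within $\SUMS_{i_1+i_2}(X_I)$. The other delicate point is making the witness extraction deterministic without inflating the time bound, where I would rely on the known deterministic machinery for witnesses of Boolean (cyclic) convolutions.
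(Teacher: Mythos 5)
Your proposal is correct and follows essentially the same route as the paper: a bottom-up recursion over dyadic subintervals of the value range, computing $\{\SUMS_i(X\cap I)\}_{i\in[k]}$ at each node via $O(k^2)$ Boolean convolutions on windows of width $O(k\cdot 2^s)$, using deterministic witness-finding for the convolutions (the paper cites Alon and Naor, which as you note applies to convolution), and backtracking through stored witnesses for the $\OO(k)$-time reporting. The only cosmetic difference is that you phrase the per-node cost as a sum over pairs with $i_1+i_2\le k$ rather than bounding each of $O(k^2)$ convolutions by $\OO(k\cdot 2^s)$, but the resulting $\OO(k^3 u)$ total and the correctness argument are the same.
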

\begin{proof}
W.l.o.g., assume that $u$ is a power of 2.  A \emph{dyadic interval} is an interval
of the form $(a,a+\ell]$ where the length $\ell$ is a power of 2 and $a$ is divisible by $\ell$.
For each dyadic interval $I=(a,a+\ell]\subseteq (0,u]$,
we want to compute $\{\SUMS_i(X\cap I)\}_{i\in[k]}$.

Decompose $I$ into two dyadic subintervals $I'=(a,a+\ell/2]$ and $I''=(a+\ell/2,a+\ell]$.  
We can compute $\{C_i := \SUMS_i(X\cap I)\}_{i\in[k]}$ from $\{A_{i'}:=\SUMS_{i'}(X\cap I')\}_{i'\in [k]}$
and $\{B_{i''}:=\SUMS_{i''}(X\cap I'')\}_{i''\in[k]}$ by the formula
$C_i = \bigcup_{i',i'':\: i'+i''=i} (A_{i'}+B_{i''})$.
Since $A_{i'}\subseteq (ai',(a+\ell/2)i']$ and $B_{i''}\subseteq ((a+\ell/2)i'',(a+\ell)i'']$ are contained in intervals of length $O(k\ell)$ for each $i',i''\in [k]$,
we can compute each sumset $A_{i'}+B_{i''}$ via a convolution of two Boolean vectors of length $O(k\ell)$ in $O(k\ell\log(k\ell))$ time.
For each element $z\in A_{i'}+B_{i''}$, we can also compute a \emph{witness} $(x,y)\in A_{i'}\times B_{i''}$ with $x+y=z$, 
in $O(k\ell\log^{O(1)}(k\ell))$ time using the deterministic witness-finding technique by Alon and Naor~\cite{AlonN96}.\footnote{Alon and Naor's witness-finding technique was originally described for matrix multiplication, but the technique is general and applies to convolution.}
The total cost over all $O(k^2)$ choices of $(i',i'')$ is $O(k^3\ell\log^{O(1)}\ell)$.
For each element $z\in C_i$, we can also record an $(i',i'')$ with $i'+i''=i$ and $z\in A_{i'}+B_{i''}$.

Proceeding bottom-up over all dyadic intervals, which have total length $O(u\log u)$,
we obtain a total running time of $\OO(k^3 u)$.
For the root interval $I_0=(0,u]$,
we can report a subset corresponding to a given $i\in [k]$ and $y\in \SUMS_i(X\cap I_0)$ in $\OO(k)$ time by backtracking using 
the witnesses to the convolutions.
\end{proof}

\PAPER{\noindent} 
(\emph{Note}.  Koiliaris and Xu's original version of the algorithm~\cite{KoiliarisX19} achieved $\OO(k^2u)$ time instead of $\OO(k^3u)$, essentially
by observing that the $C_i$'s can be computed by a single convolution of vectors of length $O(k^2\ell)$, instead of $O(k^2)$ convolutions of vectors of length $O(k\ell)$.  Also, they performed the divide-and-conquer slightly differently,
by sorting the elements $x_1,\ldots,x_n$ of $X$ and computing $\{\SUMS_i(\{x_j:j\in I\})\}_{i\in[k]}$ for dyadic
intervals $I\subseteq (0,n]$.  This allows some $\log u$ factors to be replaced by $\log n$, which isn't important to us,
although this alternative will be useful later in Section~\ref{sec:os}.)

\mysubsection{Halver}

The key idea behind our algorithm concerns the construction of a \emph{halver}, which we define below:

\begin{definition}\label{def:halver}
Given a set $X$ of positive integers, and numbers $k$ and $\Delta$,
a \emph{$(k,\Delta)$-halver} of $X$ is a partition of $X$ into two disjoint subsets $X'$ and $X''$ with the following property:
\begin{quote}
For every subset $S\subseteq X$ with $|S|\le k$, 
there exists a subset $\hatS\subseteq X$ with $|\hatS|\le k$, $\SUM(\hatS)=\SUM(S)$,
and $|\hatS\cap X'|, |\hatS\cap X''|\le |\hatS|/2 + \Delta$.
\end{quote}
\end{definition}

Note the slight subtlety with the existential quantifier in the above definition (since there could be multiple
subsets with the same cardinality and the same sum as $S$).  
We recall a well-known fact from discrepancy theory:

\begin{lemma}\label{lem:discrep}
Given a set $X$ and a collection $\SSS$ of subsets of $X$ each of size at most $k$,
we can compute a partition of $X$ into two subsets $X'$ and $X''$ such that
$|S\cap X'|, |S\cap X''|\le |S|/2 + O(\sqrt{|S|\log|\SSS|})$ for all $S\in\SSS$, in $O(k|\SSS| + |X|)$ deterministic time.
\end{lemma}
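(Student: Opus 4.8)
The plan is to prove this by the method of conditional expectations, driven by a Chernoff‑type pessimistic estimator. I would look for a signing $\chi\colon X\to\{-1,+1\}$ and then output $X'=\chi^{-1}(+1)$ and $X''=\chi^{-1}(-1)$. For $S\in\SSS$ with $s:=|S|$, writing $D_S:=\sum_{x\in S}\chi(x)$ we have $|S\cap X'|=(s+D_S)/2$ and $|S\cap X''|=(s-D_S)/2$, so it is enough to produce a signing with $|D_S|=O(\sqrt{s\log|\SSS|})$ for every $S\in\SSS$ (the degenerate cases $|\SSS|\le 1$ or $S=\emptyset$ are easily handled directly, so assume $|\SSS|\ge 2$ and each $S$ nonempty).

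First I would fix a sufficiently large constant $c$ and, for each $S$, set $a_S:=c\sqrt{s\ln|\SSS|}$ and $\lambda_S:=a_S/s$. Under a uniformly random independent signing, the standard Chernoff‑type (moment generating function) argument gives $\Pr[D_S\ge a_S]\le e^{-\lambda_S a_S}(\cosh\lambda_S)^s\le e^{-a_S^2/(2s)}=|\SSS|^{-c^2/2}$, and symmetrically $\Pr[D_S\le -a_S]\le|\SSS|^{-c^2/2}$; summing over all $S\in\SSS$ and both tails makes the total failure probability at most $2|\SSS|^{1-c^2/2}<1$. Now process the elements $x_1,x_2,\ldots$ of $X$ one at a time. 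After committing the signs of $x_1,\ldots,x_j$, for each $S\in\SSS$ with $d$ of its elements still undecided define
\[
\Phi_S^{\pm}\;:=\;e^{-\lambda_S a_S}(\cosh\lambda_S)^{d}\prod_{i\le j,\, x_i\in S} e^{\pm\lambda_S\chi(x_i)},\qquad \Phi\;:=\;\sum_{S\in\SSS}\bigl(\Phi_S^{+}+\Phi_S^{-}\bigr).
\]
The computation above shows $\Phi<1$ initially. Since $\cosh\lambda_S=\frac12(e^{\lambda_S}+e^{-\lambda_S})$, the current value of $\Phi$ is exactly the average of the two values it would take if the sign of $x_{j+1}$ were set to $+1$ or to $-1$; hence at least one of these choices does not increase $\Phi$, and I commit $x_{j+1}$ to it. Once all signs are fixed, $\Phi<1$, so for every $S$ both $\Phi_S^{+}<1$ and $\Phi_S^{-}<1$, which force $D_S<a_S$ and $-D_S<a_S$, i.e.\ $|D_S|<c\sqrt{s\ln|\SSS|}=O(\sqrt{|S|\log|\SSS|})$, as desired.

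For the running time I would store $\SSS$ as incidence lists and maintain, for each $S\in\SSS$, its two running products $\prod_{i\le j,\, x_i\in S}e^{\pm\lambda_S\chi(x_i)}$ and a counter for its number of undecided elements; then $\Phi_S^{\pm}$, and hence the contribution of $S$ to $\Phi$, is an $O(1)$‑time expression. Committing the sign of an element $x$ changes only the terms of sets containing $x$, so updating those products and $\Phi$, and evaluating the two tentative values of $\Phi$ to select a non‑increasing choice, costs $O(\deg_{\SSS}(x))$ time, where $\deg_{\SSS}(x)$ counts the sets of $\SSS$ that contain $x$. Summing, $\sum_{x\in X}\deg_{\SSS}(x)=\sum_{S\in\SSS}|S|\le k|\SSS|$; elements in no set of $\SSS$ cost $O(1)$ each; and the constants $e^{-\lambda_S a_S}$, $\cosh\lambda_S$, $e^{\pm\lambda_S}$, which depend only on $|S|$, are precomputed in $O(|\SSS|)$ time. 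This gives the claimed $O(k|\SSS|+|X|)$ bound.

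The main obstacle is bookkeeping rather than mathematics: the efficiency hinges on each element's update touching only its incident sets, so one must maintain the partial products incrementally instead of recomputing $\Phi$ from scratch. The only other point to check is that $\Phi$ involves exponentials; but since we only ever compare the two tentative values of $\Phi$ and need a final strict inequality with constant‑factor slack in the exponent, it suffices to carry all arithmetic with $O(\log(k|\SSS|))$‑bit fixed‑point approximations, which does not change the asymptotic running time.
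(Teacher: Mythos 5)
Your proposal is correct and follows exactly the approach the paper credits to Chazelle's textbook: the method of conditional expectations driven by a Chernoff-type pessimistic estimator, which the paper cites rather than spells out. In particular, your running-time accounting via $\sum_{x\in X}\deg_{\SSS}(x)=\sum_{S\in\SSS}|S|\le k|\SSS|$ is precisely the refinement the paper notes in its footnote to sharpen Chazelle's stated $O(|X||\SSS|)$ bound to $O(k|\SSS|+|X|)$.
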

\begin{proof}
The existence of such a partition $(X',X'')$ is a basic fact in discrepancy theory~\cite{ChazelleBOOK}.  (A random partition works with high probability by a simple application of the Chernoff bound.)  A deterministic construction of $(X',X'')$ can be obtained by the method of conditional expectation; see~\cite[Theorem 1.2]{ChazelleBOOK} for a good description.%
\footnote{The time bound stated there was $O(|X||\SSS|)$, but because all our subsets have sizes at most $k$, the running time of the algorithm is actually $O(k|\SSS|+|X|)$.}
\end{proof}

The above lemma implies the existence of a $(k,\OO(\sqrt{k}))$-halver:
for each $y\in[ku]$, we simply fix one subset $S\subseteq X$ with $|S|\le k$ and $\SUM(S)=y$, if exists, and add it to
 the collection $\SSS$, and then apply Lemma~\ref{lem:discrep} to this collection.  However, the construction time is too large
(we would have to solve the subset sum problem in order to construct the collection $\SSS$ in the first place!).  
To improve the construction time, we worsen the additive error of our halver from $\OO(\sqrt{k})$ to $\OO(k/\sqrt{b})$ for a parameter $b\le k$.
We accomplish this by building a collection of small ``canonical'' subsets of sizes at most $b$, generated
from Koiliaris and Xu's algorithm, and then apply Lemma~\ref{lem:discrep} to construct a halver for this collection:


\begin{lemma}[Canonical subset generation]\label{lem:collect}
Given a set $X$ of integers in $[u]$ and a number $b$,
we can generate a collection $\SSS$ of $\OO(b^2u)$ ``canonical'' subsets of $X$ each of size at most $b$, in $\OO(b^3u)$ deterministic time, satisfying the following property:
\begin{quote}
For every subset $S\subseteq X$,
there exists a subset $\hatS\subseteq X$ such that $|\hatS|=|S|$, $\SUM(\hatS)=\SUM(S)$, and 
$\hatS$ is expressible as a union of $O((|S|/b + 1)\log u)$ disjoint canonical subsets in $\SSS$.
\end{quote}
\end{lemma}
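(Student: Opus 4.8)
The plan is to reuse the dyadic divide-and-conquer of Lemma~\ref{lem:KX} with parameter $k=b$ and to harvest its witnesses as the canonical collection $\SSS$. That algorithm, proceeding bottom-up over all dyadic intervals $I\subseteq(0,u]$, computes $\SUMS_i(X\cap I)$ for every $i\in[b]$ together with backtracking information, so for every such $(I,i,y)$ with $y\in\SUMS_i(X\cap I)$ we can extract one witness subset $S_{I,i,y}\subseteq X\cap I$ with $|S_{I,i,y}|=i$ and $\SUM(S_{I,i,y})=y$; we let $\SSS$ consist of all these $S_{I,i,y}$, each of size $i\le b$. If $I=(a,a+\ell]$ then $\SUMS_i(X\cap I)\subseteq(ai,(a+\ell)i]$, an interval containing $\ell i\le\ell b$ integers, so $I$ contributes $\sum_{i\in[b]}\ell i=O(\ell b^2)$ canonical subsets; summing over the $u/\ell$ dyadic intervals of each length $\ell$ and over the $O(\log u)$ values of $\ell$ gives $|\SSS|=\OO(b^2u)$. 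Lemma~\ref{lem:KX} runs in $\OO(b^3u)$ time, and reconstructing the $\OO(b^2u)$ witnesses by backtracking (each has size $\le b$) costs a further $\OO(b^3u)$, for $\OO(b^3u)$ in total.

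For the key property, given $S\subseteq X$ I set $i_0=|S|$, $y_0=\SUM(S)$, $I_0=(0,u]$, and note $y_0\in\SUMS_{i_0}(X\cap I_0)$ since $X\cap I_0=X$. Now recurse: at a triple $(I,i,y)$ with $y\in\SUMS_i(X\cap I)$, if $i\le b$ output the single canonical subset $S_{I,i,y}$; otherwise split $I$ into its two dyadic halves $I',I''$, and using $\SUMS_i(X\cap I)=\bigcup_{i'+i''=i}\bigl(\SUMS_{i'}(X\cap I')+\SUMS_{i''}(X\cap I'')\bigr)$ pick $i',i'',y',y''$ with $i'+i''=i$, $y'+y''=y$, $y'\in\SUMS_{i'}(X\cap I')$, $y''\in\SUMS_{i''}(X\cap I'')$ and recurse on $(I',i',y')$ and $(I'',i'',y'')$. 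This terminates, because once $\ell\le b$ we have $i\le\ell\le b$ ($X\cap I$ has at most $\ell$ distinct elements). Let $\hatS$ be the union of all canonical subsets output. Distinct dyadic halves are disjoint, so this union is disjoint; and $i$ and $y$ are conserved at each split, so $|\hatS|=i_0$ and $\SUM(\hatS)=y_0$.

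The only delicate point is to bound the \emph{number} of canonical subsets in $\hatS$, and for this I would unfold the recursion into a tree $T$ rooted at $(I_0,i_0,y_0)$, whose internal nodes are the triples with $i>b$ (each with its two dyadic halves as children) and whose leaves are the triples with $i\le b$ (each contributing one canonical subset, or none when $i=0$). Then $T$ is obtained from the complete binary tree of dyadic intervals by pruning below every node with $i\le b$, so its leaf intervals are pairwise incomparable and hence pairwise disjoint; this is exactly what made the union $\hatS$ disjoint above. The depth of $T$ is at most $\log u$; going down one level preserves the sum of the $i$-values of the nodes except that leaves drop out, so this sum is at most $i_0$ at every depth, whence (since internal nodes have $i>b$) there are fewer than $i_0/b$ internal nodes at each depth, fewer than $2i_0/b$ leaves at each depth, and thus at most $O((i_0/b+1)\log u)$ leaves overall. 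This counting step is the main obstacle: a split can be maximally unbalanced, with all of $S\cap I$ landing in one half, so root-to-leaf paths may have length $\Theta(\log u)$ and $i$ need not shrink along them; it is only the amortized ``total $i$-value per level'' argument together with the $i>b$ threshold that keeps the number of pieces slightly sublinear, as needed.
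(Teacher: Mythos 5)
Your proposal is correct and takes essentially the same approach as the paper: harvest the witnesses $S_I[i,y]$ from Koiliaris--Xu's dyadic divide-and-conquer with $k=b$ as the canonical collection, then truncate the dyadic tree at nodes with $i\le b$ and bound the number of leaves per level by $O(|S|/b+1)$. The paper phrases this slightly more directly (it partitions $(0,u]$ into the dyadic intervals induced by $S$ itself, each containing at most $b$ points of $S$, and sets $i_I=|S\cap I|$, $y_I=\SUM(S\cap I)$), whereas your version recurses on triples $(I,i,y)$ and allows any valid split; instantiating your recursion with the split induced by $S$ recovers the paper's construction, and your amortized ``total $i$ per level'' count is the same argument.
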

\begin{proof}
Run the algorithm from the proof of Lemma~\ref{lem:KX} with $k=b$.
For each dyadic interval $I=(a,a+\ell]$, $i\in [b]$, and $y\in \SUMS_i(X\cap I)$,
find one subset $S_I[i,y]\subseteq X\cap I$ with $|S_I[i,y]|=i$ and $\SUM(S_I[i,y])=y$, and add the subset to the collection $\SSS$.
The number of such subsets is $O(b^2 \ell)$ per $I$ (since there are $b$ choices of $i$ and $O(\ell b)$ choices of $y\in (ai,(a+\ell)i]$), each of which can be generated
in $\OO(b)$ time.  Over all dyadic intervals $I$ (which have total length $O(u\log u)$), the total number of subsets is $\OO(b^2u)$.  

To prove that $\SSS$ satisfies the stated property, consider an arbitrary
subset $S\subseteq X$.  First partition $(0,u]$ into a collection ${\cal I}(S)$ of disjoint dyadic intervals, each containing at most
$b$ points of $S$: namely, we start by visiting the root dyadic interval, and when we visit a dyadic interval containing
more than $b$ points of $S$, we recursively visit its two child intervals.
The number of dyadic intervals visited per level is $O(|S|/b+1)$.  Thus, the number
of dyadic intervals generated is $|{\cal I}(S)|\le O((|S|/b+1)\log u)$.
For each interval $I\in {\cal I}(S)$, let $i_I=|S\cap I|\in [b]$ and $y_I=\SUM(S\cap I)\in \SUMS_i(X\cap I)$.
Define $\hatS$ to be the union of $S_I[i_I,y_I]$ over all $I\in {\cal I}(S)$; these sets $S_I[i_I,y_I]\subseteq X\cap I$ are obviously disjoint because the intervals in ${\cal I}(S)$ are disjoint; and $|S_I[i_I,y_I]|=|S\cap I|$ and $\SUM(S_I[i_I,y_I])=\SUM(S\cap I)$.
Therefore, $|\hatS|=|S|$ and $\SUM(\hatS)=\SUM(S)$.
\end{proof}

\begin{lemma}[Fast halver construction]\label{lem:halver}
Given a set $X$ of integers in $[u]$ and $b\le k$,
we can compute a $(k,\OO(k/\sqrt{b}))$-halver of $X$ in $\OO(b^3 u)$ deterministic time.
\end{lemma}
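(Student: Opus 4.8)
The plan is to feed the output of Lemma~\ref{lem:collect} straight into Lemma~\ref{lem:discrep}. First I would run the canonical subset generation of Lemma~\ref{lem:collect} on $X$ with the given parameter $b$, producing in $\OO(b^3u)$ deterministic time a collection $\SSS$ of $\OO(b^2u)$ canonical subsets of $X$, each of size at most $b$, with the stated covering property. Then I would apply Lemma~\ref{lem:discrep} to $\SSS$ (whose members all have size at most $b$), which in $O(b|\SSS|+|X|)=\OO(b^3u)$ deterministic time returns a partition $X=X'\cup X''$ with
\[
|T\cap X'|,\ |T\cap X''|\ \le\ \frac{|T|}{2}+O\!\left(\sqrt{|T|\log|\SSS|}\right)\qquad\mbox{for every }T\in\SSS.
\]
The two steps together cost $\OO(b^3u)$, so it remains only to check that $(X',X'')$ is a $(k,\OO(k/\sqrt b))$-halver.

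For the verification, fix an arbitrary $S\subseteq X$ with $|S|\le k$. By Lemma~\ref{lem:collect} there is a subset $\hatS\subseteq X$ with $|\hatS|=|S|\le k$ and $\SUM(\hatS)=\SUM(S)$ that decomposes as a disjoint union $\hatS=T_1\cup\cdots\cup T_m$ of canonical subsets $T_j\in\SSS$, where $m=O((|S|/b+1)\log u)$. Since the $T_j$ partition $\hatS$ and each belongs to $\SSS$,
\[
|\hatS\cap X'|\ =\ \sum_{j=1}^{m}|T_j\cap X'|\ \le\ \frac{|\hatS|}{2}+O\!\left(\sqrt{\log|\SSS|}\right)\sum_{j=1}^{m}\sqrt{|T_j|}.
\]
By Cauchy--Schwarz, $\sum_{j=1}^{m}\sqrt{|T_j|}\le\sqrt{m\sum_{j=1}^{m}|T_j|}=\sqrt{m\,|\hatS|}$. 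Since $b\le k$ we have $m=O((k/b)\log u)$, and $|\hatS|\le k$, so $\sqrt{m\,|\hatS|}=O\!\left((k/\sqrt b)\sqrt{\log u}\right)$; moreover $|\SSS|=\OO(b^2u)$, so the factor $\sqrt{\log|\SSS|}$ is absorbed by the $\OO(\cdot)$. Hence $|\hatS\cap X'|\le|\hatS|/2+\OO(k/\sqrt b)$, and the same bound holds for $|\hatS\cap X''|$ by the symmetric argument, which is exactly the defining property of a $(k,\OO(k/\sqrt b))$-halver.

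The step requiring the most care is this final accounting. The additive error for $\hatS$ is not that of a single canonical piece but a sum over roughly $m\approx(k/b)\log u$ pieces, so one must confirm that the trade-off between the number of pieces ($\approx k/b$) and their sizes (each at most $b$, but summing to only $|S|\le k$) still lands at $\OO(k/\sqrt b)$. The inequality $\sum_j\sqrt{|T_j|}\le\sqrt{m\,|\hatS|}$ is what makes it work, and it is essentially tight when the pieces are balanced; everything else---the $\OO(b^3u)$ running time, the disjointness of the $T_j$, and the preservation of cardinality and sum---is inherited directly from Lemmas~\ref{lem:collect} and~\ref{lem:discrep}.
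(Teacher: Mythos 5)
Your proof is correct and follows essentially the same path as the paper's: apply Lemma~\ref{lem:collect} to build $\SSS$, apply Lemma~\ref{lem:discrep} to it, and sum the per-canonical-set discrepancies over the $O((k/b)\log u)$ pieces of $\hatS$. The only difference is cosmetic: you use Cauchy--Schwarz to bound $\sum_j\sqrt{|T_j|}$ by $\sqrt{m|\hatS|}$, whereas the paper simply uses $|T_j|\le b$ so that each piece contributes $\OO(\sqrt{b})$; both yield $\OO(k/\sqrt b)$ since the extra $\log u$ factors vanish into $\OO$.
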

\begin{proof}
Apply Lemma~\ref{lem:collect} to construct a collection $\SSS$ of $\OO(b^2u)$ canonical subsets in $\OO(b^3u)$ time.
Then 
apply Lemma~\ref{lem:discrep} to this collection $\SSS$, with $k$ replaced by $b$, to compute a partition $(X',X'')$ in $\OO(b^3u)$ time.

To prove that $(X',X'')$ satisfies the $(k,\OO(k/\sqrt{b}))$-halver property,
consider an arbitrary subset $S\subseteq X$ with $|S|\le k$.
Let $\hatS\subseteq X$ be such that $|\hatS|=|S|\le k$, $\SUM(\hatS)=\SUM(S)$, and
$\hatS$ is expressible as a union of $O((k/b)\log u)$ disjoint canonical subsets $S_j\in\SSS$.
We have $|S_j\cap X'|, |S_j\cap X''|\le  |S_j|/2 +  \OO(\sqrt{b})$ for each such subset $S_j$.
Summing over $S_j$ gives $|\hatS\cap X'|, |\hatS\cap X''|\le |\hatS|/2 + \OO(k/b\cdot\sqrt{b})$.
\end{proof}

\mysubsection{Putting everything together}

Finally, we use halvers to obtain a deterministic divide-and-conquer algorithm for subset sum (which may be viewed as a variant of Bringmann's randomized algorithm~\cite{Bringmann17}):

\begin{lemma}\label{lem:DC}
Given a set $X$ of integers in $[u]$ and a number $k$,
we can compute a set $\ANS\subseteq [ku]$ such that $\SUMS_{\le k}(X)\subseteq\ANS\subseteq \SUMS(X)$,
in $\OO(ku)$ deterministic time.
\end{lemma}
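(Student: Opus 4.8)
The plan is a recursive divide-and-conquer that replaces the single level of $k$-way color coding in Bringmann's algorithm by $O(\log k)$ levels of binary halving. Fix, once and for all, a parameter $b$ equal to a sufficiently large polylogarithmic function of $k$ and $u$ --- concretely, large enough that the additive-error bound $\Delta=\OO(\kappa/\sqrt b)$ guaranteed by Lemma~\ref{lem:halver} is at most $\kappa/(4\log_2 k)$ for every relevant $\kappa$, and at least some absolute constant. The recursive routine takes a set $Z\subseteq[u]$ and a parameter $\kappa$ and returns a set $\ANS(Z,\kappa)$ with $\SUMS_{\le\kappa}(Z)\subseteq\ANS(Z,\kappa)\subseteq\SUMS(Z)\cap[\kappa u]$. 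If $\kappa\le b$, it calls Lemma~\ref{lem:KX} with $k=\kappa$ to obtain $\SUMS_j(Z)$ for all $j\in[\kappa]$ and returns their union, in $\OO(\kappa^3 u)=\OO(u)$ time. Otherwise it uses Lemma~\ref{lem:halver} (legal since $b\le\kappa$) to compute a $(\kappa,\Delta)$-halver $(Z',Z'')$ of $Z$, sets $\kappa'=\lfloor\kappa/2+\Delta\rfloor$, recurses to get $\ANS'=\ANS(Z',\kappa')$ and $\ANS''=\ANS(Z'',\kappa')$, and returns $(\ANS'+\ANS'')\cap[\kappa u]$, where $\ANS'+\ANS''$ is the sumset computed by one Boolean convolution. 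The final answer is the output of this routine on $(X,k)$.

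Correctness follows by induction on the recursion. The inclusion $\ANS(Z,\kappa)\subseteq\SUMS(Z)$ is clear at the base case, and in the recursive case any element of $\ANS'+\ANS''$ is a sum of an element of $\SUMS(Z')$ and one of $\SUMS(Z'')$, hence --- the subsets realizing them being disjoint because $Z'\cap Z''=\emptyset$ --- lies in $\SUMS(Z)$; capping at $[\kappa u]$ preserves this. For the reverse inclusion, let $S\subseteq Z$ with $|S|\le\kappa$ and $\SUM(S)=y$. The halver property gives $\hatS\subseteq Z$ with $|\hatS|\le\kappa$, $\SUM(\hatS)=y$, and $|\hatS\cap Z'|,|\hatS\cap Z''|\le|\hatS|/2+\Delta\le\kappa'$ (the left sides being integers). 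Since $Z',Z''$ partition $Z\supseteq\hatS$, the sets $\hatS\cap Z'$ and $\hatS\cap Z''$ partition $\hatS$, so by the inductive hypothesis $y=\SUM(\hatS\cap Z')+\SUM(\hatS\cap Z'')\in\SUMS_{\le\kappa'}(Z')+\SUMS_{\le\kappa'}(Z'')\subseteq\ANS'+\ANS''$; and $y\le|\hatS|\,u\le\kappa u$, so $y\in\ANS(Z,\kappa)$.

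For the running time, note that since the error bound $\Delta$ depends only on $\kappa,b,u$ (not on $Z$), all nodes at recursion depth $i$ carry the same parameter $k_i$, where $k_0=k$ and $k_{i+1}=\lfloor k_i/2+\Delta_i\rfloor\le (1/2+1/(4\log_2 k))\,k_i$. A short induction then gives $k_i\le 2k/2^i$ for all $i\le\log_2 k$, so the recursion reaches the base case $k_i\le b$ at some depth $D\le\log_2 k+O(1)$; the recursion tree is thus a complete binary tree of depth $D$ with $2^{D+1}-1=O(k)$ nodes. Each internal node spends $\OO(b^3 u)=\OO(u)$ on its halver and $\OO(k_i u)$ on its convolution; summed over the $2^i$ nodes of depth $i$ the convolutions cost $2^i\cdot\OO(k_i u)=\OO(ku)$, and over all $D=O(\log k)$ depths $\OO(ku)$; the halvers over all $O(k)$ nodes cost $\OO(ku)$; and the $O(k)$ base cases cost $\OO(u)$ each, i.e.\ $\OO(ku)$ in total. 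Hence the algorithm runs in $\OO(ku)$ time, and the root's output lies in $[ku]$ as required.

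The only real obstacle is choosing the halver quality in the right regime. The ``textbook'' discrepancy bound would give error $\OO(\sqrt{\kappa})$, but then $k_i$ stops decreasing once it reaches $\Theta(\sqrt k)$ and the recursion cannot finish; conversely, an error as large as $\kappa/4$ makes the depth-$i$ cost $2^i\cdot\OO(k_i u)$ grow geometrically in $i$ (since $k_i$ would only shrink like $(3/4)^i k$), giving a superlinear bound. The sweet spot is error $\kappa/\mathrm{polylog}$: a decay slow enough that $k_i=\Theta(k/2^i)$ and the depth is $O(\log k)$, yet precisely what Lemma~\ref{lem:halver} delivers for a polylogarithmic $b$, so each of the $O(k)$ halvers costs only $\OO(b^3 u)=\OO(u)$. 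The apparent circularity --- the tolerable error depends on the depth, which depends on the error --- is spurious, since $1/(4\log_2 k)$ is an a priori bound on the fraction we need, so $b$ can be fixed at the outset from $k$ and $u$ alone.
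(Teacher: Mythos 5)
Your proposal is correct and takes essentially the same approach as the paper's proof: a binary divide-and-conquer driven by halvers from Lemma~\ref{lem:halver} with $b = \mathrm{polylog}(ku)$, using Lemma~\ref{lem:KX} as the base case, combining via a clipped Boolean convolution, and with the inductive correctness argument routed through the replacement subset $\hatS$. The only cosmetic difference is that you pin $\Delta$ to $\kappa/(4\log_2 k)$ (with the original $k$) and unroll the recurrence level by level, whereas the paper writes the error as $k/\log k$ at each level and states the recurrence $T(k,u)=2T(k/2+k/\log k,u)+\OO(ku)$ without spelling out its solution.
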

\begin{proof}
The algorithm proceeds as follows:
\begin{enumerate}
\item Compute a $(k,k/\log k)$-halver $(X',X'')$ of $X$ in $\OO(u)$ time, by  
applying Lemma~\ref{lem:halver} with $b=\log^c(ku)$ for a sufficiently large constant $c$ (assuming $k>b$).
\item
Recursively solve the problem for the input set $X'$ and number $k/2+k/\log k$;
let $\ANS'\subseteq [ku]$ be the output set.
Recursively solve the problem for the input set $X''$ and number $k/2+k/\log k$;
let $\ANS''\subseteq [ku]$ be the output set.
\item
Return the clipped sumset $\ANS = (\ANS'+\ANS'')\cap [ku]$, which can be computed by a convolution two Boolean vectors of length $O(ku)$ in
$\OO(ku)$ time.
\end{enumerate}
The running time satisfies the recurrence: 
\[ T(k,u) \ =\ 2\,T(k/2+k/\log k,u) + \OO(ku).
\]
For the base case $k\le b$ (which implies $k\le \log^{O(1)} u$), we can use Lemma~\ref{lem:KX} to compute $\SUMS_{\le k}(X)$ in $\OO(u)$ time and
thus get $T(k,u)=\OO(u)$.
The recurrence solves to $T(k,u)=\OO(ku)$.

To prove correctness, consider a subset $S\subseteq X$ with $|S|\le k$ and $\SUM(S)=y$.
By definition of halver, there exists a subset $\hatS\subseteq X$ with $|\hatS|\le k$,
$\SUM(\hatS)=y$, and $|\hatS\cap X'|, |\hatS\cap X''|\le k/2+k/\log k$.
Then $\SUM(\hatS\cap X')\in \ANS'$ and $\SUM(\hatS\cap X'')\in \ANS''$ by induction,
and so $y=\SUM(\hatS\cap X') + \SUM(\hatS\cap X'')\in \ANS$.
\end{proof}

\begin{theorem}
Given a set $X$ of $n$ integers in $[t]$,
we can compute $\SUMS(X)\cap [t]$ in $\OO(t)$ deterministic time.
\end{theorem}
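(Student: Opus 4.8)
The plan is to reduce the given problem to Lemma~\ref{lem:DC} by handling the issue that $k$ (the number of elements in a feasible subset) can be as large as $n$, whereas we want the running time to depend on $t$ rather than on $nt$. The key observation is that only elements of value at most $t$ matter, and among those, any feasible subset summing to a value $\le t$ can use at most $t/v$ elements of value $\ge v$. So I would partition $X$ (after discarding elements exceeding $t$) into $O(\log t)$ groups $X_j = \{x \in X : 2^{j-1} < x \le 2^j\}$ for $j = 0, 1, \ldots, \lceil \log t\rceil$. Within group $X_j$, every element has value $\Theta(2^j)$, so any subset of $X_j$ contributing at most $t$ to the sum has at most $k_j := O(t/2^j)$ elements.

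Next I would apply Lemma~\ref{lem:DC} to each group $X_j$ with parameters $u = 2^j$ and $k = k_j = O(t/2^j)$: this produces in $\OO(k_j u) = \OO(t)$ deterministic time a set $\ANS_j \subseteq [k_j 2^j] \subseteq [O(t)]$ with $\SUMS_{\le k_j}(X_j) \subseteq \ANS_j \subseteq \SUMS(X_j)$. Crucially, $\SUMS(X_j) \cap [t] \subseteq \SUMS_{\le k_j}(X_j)$ since any subset of $X_j$ with sum $\le t$ has at most $k_j$ elements, so after clipping, $\ANS_j \cap [t] = \SUMS(X_j)\cap[t]$. Then I would combine the groups: the final answer is the clipped iterated sumset $\bigl( \ANS_0 + \ANS_1 + \cdots + \ANS_{\lceil \log t\rceil}\bigr) \cap [t]$, computed by $O(\log t)$ successive convolutions of Boolean vectors of length $O(t)$, each taking $\OO(t)$ time. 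Correctness of the combination follows because $\SUMS(X) = \SUMS(X_0) + \SUMS(X_1) + \cdots$ (the groups are disjoint), and clipping at each intermediate step to $[t]$ is safe since all the $x_i$ are positive, so dropping a partial sum exceeding $t$ cannot lose any final sum that is at most $t$.

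The total running time is $\OO(t)$ per group times $O(\log t)$ groups, plus $O(\log t)$ convolutions of length $O(t)$, which is $\OO(t)$ overall. I should also address the edge cases where a group is empty or where $k_j$ is a small constant — these are subsumed by the base-case handling inside Lemma~\ref{lem:DC} — and note that elements equal to specific small powers of $2$ land in exactly one group so no double counting occurs.

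The main obstacle I anticipate is not any single deep step but rather making sure the cardinality bound $k_j = O(t/2^j)$ is tight enough that $\sum_j k_j 2^j$ does not blow up: naively $k_j 2^j = O(t)$ per group and there are $O(\log t)$ groups, giving $\OO(t)$, which is fine — but one must be careful that the ``$+1$'' slack in bounds like $(k/b + 1)$ inside the halver construction, and the additive $k/\log k$ terms in the recursion of Lemma~\ref{lem:DC}, do not interact badly when $k_j$ is itself only polylogarithmic for the largest $j$. This is handled cleanly because Lemma~\ref{lem:DC} already absorbs small $k$ into its base case. A secondary point requiring a line of justification is that clipping intermediate sumsets to $[t]$ preserves all targets in $[t]$, which uses positivity of the integers in $X$ in an essential way.
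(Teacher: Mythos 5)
Your proof is correct and takes essentially the same route as the paper: both partition $X$ by value into $O(\log t)$ dyadic buckets, invoke Lemma~\ref{lem:DC} on each bucket with $k\cdot u = O(t)$ (using the observation that a subset of values in $(v,2v]$ summing to at most $t$ has $O(t/v)$ elements), and then combine the per-bucket outputs with $O(\log t)$ clipped Boolean convolutions. The only differences are cosmetic (your bucketing is indexed from small values up, the paper's from large values down), and your remark about positivity justifying intermediate clipping is a sound if implicit ingredient in the paper as well.
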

\begin{proof}
For each $i\in [\lceil\log t\rceil]$, let $X_i=X\cap (t/2^i, t/2^{i-1}]$. 
Compute $\SUMS(X_i)\cap [t]$ by applying Lemma~\ref{lem:DC}
with $u=t/2^{i-1}$ and $k=2^i$ in $\OO(2^i\cdot t/2^{i-1})=\OO(t)$ time (this is because
$\SUMS(X_i)\cap [t] = \SUMS_{\le 2^i}(X_i)\cap [t]$ when $X_i\subseteq (t/2^i,\infty)$).
We return the clipped sumset $((\SUMS(X_0)\cap[t]) + (\SUMS(X_1)\cap[t]) + \cdots)\cap [t]$,
which can be computed by $O(\log t)$ convolutions of Boolean vectors of length $O(t)$ in
$\OO(t)$ time.
\end{proof}


\mysection{Output-Sensitive Algorithms}\label{sec:os}

In this section, we describe another application of our technique: 
the derandomization of a recent output-sensitive algorithm for all-targets subset sum
by Bringmann, Fischer, and Nakos~\cite{BringmannFN25}, which runs in 
$\OO(|\SUMS(X)\cap [t]|\sqrt{n})$ time.
An earlier output-sensitive algorithm by Bringmann and Nakos~\cite{BringmannN20}, which runs in
$\OO(|\SUMS(X)\cap[t]|^{4/3})$ time, can also be derandomized in a similar way, as we will briefly mention at the end.

\mysubsection{Preliminaries}

We begin with a known subroutine on output-sensitive computation of sumsets.  This subproblem was first solved
by Cole and Hariharan~\cite{ColeH02} with randomization, which was later derandomized/improved in a series of subsequent papers
\cite{ChanL15,BringmannFN21,BringmannFN22,JinX24}:

\begin{lemma}\label{lem:sumset}
Given two sets $A$ and $B$ of integers in $[t]$,
we can compute the sumset $A+B$ in $\OO(|A+B|)$ 
deterministic time.
\end{lemma}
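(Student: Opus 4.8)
The plan is to recast the problem as \emph{sparse nonnegative convolution} and then solve it by hashing down to a small universe. Observe that $A+B$ is exactly the support of the convolution $\mathbf{1}_A * \mathbf{1}_B$ of the two indicator vectors on $[2t]$, which is a nonnegative vector with $s := |A+B|$ nonzero entries. Since $s$ is not known in advance, I would run the procedure below for geometrically increasing guesses $s = 2, 4, 8, \ldots$, aborting each run as soon as it exceeds a time budget of $\OO(s)$ or fails an internal consistency check, and stopping at the first guess that yields a verified answer. Because $A$ and $B$ are nonempty we have $s \ge \max(|A|,|B|)$, so reading the input fits in the budget, and the geometric sum of the run times is dominated by the last, successful run with $s = \Theta(|A+B|)$.

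For a fixed guess $s$, I would choose an \emph{almost-linear} hash function $h : [2t] \to [m]$ with $m = \Theta(s\log t)$ --- for instance $h(x) = \lfloor m\,(ax \bmod p)/p \rfloor$ for a prime $p$ slightly larger than $2t$ and a multiplier $a$ --- so that $h(x+y)$ is determined by $h(x)$ and $h(y)$ up to an additive $\pm 1$ error modulo $m$. Over the small universe $[m]$ (padded by a constant to absorb the error) I would compute, by FFT in $\OO(m) = \OO(s)$ time, the convolution of the hashed indicator vectors and also a position-weighted copy of the same convolution. A hash bucket receiving exactly one element $z$ of $A+B$ can then be detected, and $z$ recovered by dividing the position-weighted bucket entry by the plain one (the representation count cancels). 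A good $h$ isolates a constant fraction of $A+B$; the rest is handled either by repeating with $O(\log t)$ fresh hashes, or by the standard peeling recursion that subtracts off the already-recovered entries and recurses on the sparser residual. Finally, correctness of the recovered set $R$ is certified deterministically: $R \subseteq A+B$ by exhibiting a witness pair in $A\times B$ for each $z\in R$ (again via hashed convolutions with witness-finding in the style of \cite{AlonN96}), and $R \supseteq A+B$ by rerunning a hashed convolution, subtracting the contributions of the recovered entries, and checking that no occupied bucket survives.

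The one genuinely nontrivial point --- and where the cited line of work \cite{ColeH02,ChanL15,BringmannFN21,BringmannFN22,JinX24} does the real work --- is \emph{derandomization}: fixing the multiplier $a$ (equivalently, the modulus) deterministically while still guaranteeing few hash collisions among the $s$ unknown output elements, and replacing the randomized fingerprint in verification by a deterministic test. The standard resolutions are either (i) to iterate $a$ over an explicit $\mathrm{poly}(\log t)$-size set guaranteed by a counting argument to contain a good multiplier, testing each candidate by a self-certifying run (the hashed convolution itself exposes whether too many buckets collided), or (ii) to use an additive-combinatorics covering argument, in the spirit of Balog--Szemer\'edi--Gowers, that deterministically splits $A$ and $B$ into $\mathrm{polylog}$ pieces whose pairwise sumsets are each small enough to convolve directly. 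Either way the overhead is only polylogarithmic, which is what we need; I would simply invoke whichever variant is cleanest, since the remaining steps above are routine.
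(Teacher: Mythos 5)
The paper does not prove Lemma~\ref{lem:sumset}; it states it as a known black-box result, citing the Cole--Hariharan line of work and its deterministic successors~\cite{ColeH02,ChanL15,BringmannFN21,BringmannFN22,JinX24}. Your proposal is therefore not competing with a proof in the paper; it is an expanded sketch of the same citation. Your outline of the randomized hash-and-recover scheme (almost-linear hashing, isolation, position-weighted bucket division, peeling) is a fair summary of the Cole--Hariharan/Huang-style approach, and you correctly identify that the entire difficulty absorbed by the cited works is the derandomization plus deterministic verification.

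That said, the two derandomization ``routes'' you sketch should not be presented as though either one straightforwardly closes the gap. Route (i), iterating a multiplier over a $\mathrm{poly}(\log t)$-size explicit set with a ``self-certifying'' collision test, is not known to work as stated: the obstruction is precisely that you cannot certify you have recovered \emph{all} of $A+B$ without already controlling false negatives, and a single hashed convolution does not expose unrecovered elements that happen to land in occupied buckets -- this is the hard part of the deterministic verification in~\cite{BringmannFN22}, which uses a more elaborate scheme than ``check that no occupied bucket survives.'' Route (ii), a BSG-style additive-combinatorics split, is in the spirit of~\cite{ChanL15} but that technique gives $\OO(|A+B|)$-type bounds only for a related clustered/3SUM setting, not directly this lemma, and the polylogarithmic overhead is not automatic. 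Since you explicitly defer to the literature at the end, this is acceptable as a \emph{citation} with motivating intuition, which is all the paper itself offers; but if you intended it as a proof, the derandomization step is a genuine gap, and the honest move (which both you and the paper make) is to cite~\cite{BringmannFN22,JinX24} for it rather than reconstruct it.
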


Output-sensitive computation of clipped sumsets is more expensive.  Bringmann and Nakos~\cite{BringmannN20}
described a simple approach yielding the following (using the subroutine in Lemma~\ref{lem:sumset}):

\begin{lemma}\label{lem:clipped:sumset}
Given two sets $A$ and $B$ of integers in $[t]$,
we can compute the clipped sumset $C = (A+B)\cap [t]$ in $\OO(\sqrt{|A||B||C|})\le \OO((|A|+|C|)\sqrt{|B|})$ 
deterministic time.
\end{lemma}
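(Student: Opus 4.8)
The plan is to solve this by divide-and-conquer on the target range, using the output-sensitive \emph{unclipped} sumset routine of Lemma~\ref{lem:sumset} at the leaves and at the base of each recursive step. Assume for simplicity that $t$ is a power of two, and split $[t]$ into its lower half $[t/2]$ and upper half $(t/2,t]$; correspondingly split $A=A_L\sqcup A_H$ and $B=B_L\sqcup B_H$. Of the four partial sumsets, $A_L+B_L$ lies entirely in $[t]$, so it is a subset of $C$ and we compute it outright by Lemma~\ref{lem:sumset}; $A_H+B_H$ lies entirely in $(t,2t]$, so it is irrelevant to $C$ and is discarded; and for each of $A_L+B_H$ and $A_H+B_L$, subtracting $t/2$ from the ``high'' operand and then restricting to $[t/2]$ turns its relevant part into a clipped-sumset instance with target $t/2$ whose two operands are contained in $[t/2]$ --- which I solve recursively and shift back by $t/2$. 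Correctness is then immediate by induction. Unrolling the recursion, $C$ is the union, over all recursion nodes $v$, of the (shifted) ``low--low'' sumsets computed at those nodes; a key structural point is that at recursion depth $d$ every node's contribution lies in the \emph{common} dyadic window $[\,t-t/2^{d},\,t\,]$, so each contribution has at most $\min(|C|,\,t/2^{d})$ elements.

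For the running time I would introduce a threshold $\theta$ and make a node a \emph{leaf} --- computing its sumset directly via Lemma~\ref{lem:sumset} and then clipping, at cost $\OO(\theta)$ --- as soon as the product of the sizes of its two operands drops to $\le\theta$; every internal node then has operands whose sizes multiply to more than $\theta$. Since the operands of all nodes at a fixed depth $d$ partition $A$ on one side and $B$ on the other, the Cauchy--Schwarz inequality $\sum_v\sqrt{|A_v|\,|B_v|}\le\sqrt{(\sum_v|A_v|)(\sum_v|B_v|)}\le\sqrt{|A|\,|B|}$ shows there are at most $\sqrt{|A|\,|B|/\theta}$ internal nodes at depth $d$ (and trivially at most $2^d$); summing $\min(2^d,\sqrt{|A|\,|B|/\theta})$ over $d=O(\log t)$ gives $\OO(\sqrt{|A|\,|B|/\theta})$ internal nodes overall, hence also $\OO(\sqrt{|A|\,|B|/\theta})$ leaves. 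Each internal node spends (up to polylog factors) the size of its ``low--low'' sumset, which is at most $|C|$, and the union of the returned sets is no more expensive, so the internal nodes cost $\OO(|C|\sqrt{|A|\,|B|/\theta})$ in total, while the leaves cost $\OO(\theta\cdot\sqrt{|A|\,|B|/\theta})=\OO(\sqrt{|A|\,|B|\,\theta})$. Balancing by taking $\theta\approx|C|$ yields $\OO(\sqrt{|A|\,|B|\,|C|})$.

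Since $|C|$ is unknown in advance, I would run the procedure for $\theta=1,2,4,\dots$, each time aborting once the elapsed time exceeds the bound $\OO(\sqrt{|A|\,|B|\,\theta})$ just derived; the procedure is correct for every $\theta$, and the first $\theta\ge|C|$ (which is $<2|C|$) runs within budget, so by the geometric series the total time is $\OO(\sqrt{|A|\,|B|\,|C|})$. The remaining inequality $\sqrt{|A|\,|B|\,|C|}\le(|A|+|C|)\sqrt{|B|}$ follows from a one-line case split on whether $|A|\ge|C|$.

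I expect the running-time analysis, not correctness, to be the crux. The delicate points are: recognizing that all depth-$d$ subproblems live in one common window (which is what makes their outputs genuine sub-collections of $C$ rather than arbitrary sumsets of size up to $t/2^d$), combining this with the Cauchy--Schwarz count of non-leaf nodes per level, setting the leaf threshold and accounting each leaf as $\OO(\theta)$ rather than $\OO(t/2^d)$, and finally dealing with the unknown value of $|C|$ through threshold doubling.
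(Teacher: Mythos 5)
Your proof is correct, but it takes a genuinely different route from the one the paper is invoking. The paper does not spell out a proof here---it cites Bringmann and Nakos \cite{BringmannN20}, whose argument is a \emph{flat} bucketing scheme, not a recursion: sort $B$ and split it into $g$ consecutive (by value) groups $B_1,\ldots,B_g$; for each group $B_j$ with values in $[\beta_{j-1},\beta_j)$, shrink $A$ to the useful prefix $A_j=A\cap[t-\beta_{j-1}]$ and call Lemma~\ref{lem:sumset} on $A_j+B_j$. The overshoot of $A_j+B_j$ past $t$ involves only those $a\in A$ lying in $(t-\beta_j,\,t-\beta_{j-1}]$, and these intervals are disjoint over $j$, so the total overshoot is bounded by $|A|\cdot\max_j|B_j|$, giving $\OO(g|C|+|A||B|/g)$ overall; the stated bounds follow by choosing $g$. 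Your version instead recurses on the target window, charging by Cauchy--Schwarz across each level of the resulting binary tree and introducing a leaf threshold $\theta$ with doubling to cope with the unknown $|C|$. Both are sound and both lean on Lemma~\ref{lem:sumset} plus exponential search on $|C|$ for the $\sqrt{|A||B||C|}$ form; what the flat version buys is less machinery (no tree, no per-level node count, and no threshold is needed at all for the weaker $(|A|+|C|)\sqrt{|B|}$ bound, since one can simply fix $g=\lceil\sqrt{|B|}\,\rceil$), while your recursive version buys a somewhat more uniform ``each node's output is a subset of $C$'' invariant that you exploit via the common-window observation. One small remark: the common-window fact is a nice intuition, but what you actually use is just that the shifted output at each node equals $(A_v+B_v)\cap[s_v,\,s_v+t_v]\subseteq C$, which already follows from the induction without invoking the window alignment.
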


The next lemma below, observed by Bringmann, Fischer, and Nakos~\cite[Lemma~3.1]{BringmannFN25} (but stated in a different form),
derandomizes the small $k$ case using Bringmann's color-coding approach~\cite{Bringmann17}.
For completeness, we include a quick proof sketch:

\begin{lemma}\label{lem:color:code}
Given a set $X$ of $n$ integers in $[t]$, a set $A$ of integers in $[t]$, and a number $k$,
we can compute a set $\ANS$ such that $(A+\Sigma_{\le k}(X))\cap[t]\subseteq\ANS\subseteq(A+\Sigma(X))\cap[t]$, 
in $\OO(k^{O(1)}|(A+\SUMS(X))\cap[t]|\sqrt{n})$ deterministic time. 

Afterwards, for any given $y\in \ANS$, we can report one number $a\in A$ and one
subset $S\subseteq X$ with $a+\SUM(S)=y$, in $\OO(|S|)$ time.
\end{lemma}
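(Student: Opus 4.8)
The plan is to derandomize the color-coding step of Bringmann's algorithm~\cite{Bringmann17} by replacing its random coloring with a deterministic \emph{splitter} (perfect hash family), and to carry out the resulting sumset computation using the output-sensitive clipped-sumset subroutine of Lemma~\ref{lem:clipped:sumset}. After discarding the elements of $X$ exceeding $t$ (which cannot contribute to any sum in $[t]$), I would invoke a known deterministic construction of an $(n,k,\Theta(k^2))$-perfect hash family $\mathcal{H}$: a collection of $|\mathcal{H}|=k^{O(1)}\log n$ functions $h\colon\{1,\dots,n\}\to\{1,\dots,\Theta(k^2)\}$, computable in $\OO(k^{O(1)}n)$ deterministic time (e.g.\ via the splitters of Naor, Schulman, and Srinivasan; with $\Omega(k^2)$ buckets a balanced split is automatically an injection), such that for every index set $J$ with $|J|\le k$ some $h\in\mathcal{H}$ is injective on $J$. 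Note $\log n=\OO(1)$, so $|\mathcal{H}|=\OO(k^{O(1)})$. Each $h$ induces color classes $X_1^h,\dots,X_m^h$ of $X$ with $m=\Theta(k^2)$.

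For each $h\in\mathcal{H}$ I would compute the chain $C_0^h:=A$ and $C_j^h:=\bigl(C_{j-1}^h+(\{0\}\cup X_j^h)\bigr)\cap[t]$ for $j=1,\dots,m$ (the step being trivial when $X_j^h=\emptyset$), applying Lemma~\ref{lem:clipped:sumset} at each step, and then return $\ANS:=\bigcup_{h\in\mathcal{H}}C_m^h$. Correctness is immediate in both directions. If $y=a+\SUM(S)\le t$ with $a\in A$, $S\subseteq X$, $|S|\le k$, choose $h\in\mathcal{H}$ injective on $S$; then $|S\cap X_j^h|\le1$ for all $j$, so $\SUM(S)=\sum_j\SUM(S\cap X_j^h)$ lies in $(\{0\}\cup X_1^h)+\cdots+(\{0\}\cup X_m^h)$, hence $y\in C_m^h\subseteq\ANS$, giving $(A+\SUMS_{\le k}(X))\cap[t]\subseteq\ANS$. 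Conversely, every element of every $C_j^h$ equals $a+\SUM(S)\le t$ for some $a\in A$ and some $S\subseteq X$ picking at most one element from each color class, so $\ANS\subseteq(A+\SUMS(X))\cap[t]$.

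For the running time, put $N:=(A+\SUMS(X))\cap[t]$. Every set in the chains is contained in $N$, so $|C_{j-1}^h|,|C_j^h|\le|N|$ throughout; by Lemma~\ref{lem:clipped:sumset} the $j$-th step costs $\OO\bigl((|C_{j-1}^h|+|C_j^h|)\sqrt{|X_j^h|+1}\bigr)=\OO(|N|\sqrt{|X_j^h|})$ for a nonempty class. Summing over $j$ and using Cauchy--Schwarz, $\sum_{j=1}^m\sqrt{|X_j^h|}\le\sqrt{m\sum_j|X_j^h|}\le\sqrt{\Theta(k^2)\cdot n}=O(k\sqrt n)$, so each chain costs $\OO(|N|\,k\sqrt n)$; multiplying by $|\mathcal{H}|=\OO(k^{O(1)})$ gives the claimed $\OO(k^{O(1)}|N|\sqrt n)$ bound (we may assume $k\le n$, as otherwise $\SUMS_{\le k}(X)=\SUMS(X)$).

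For witness reporting in $\OO(|S|)$ time I would have each invocation of Lemma~\ref{lem:clipped:sumset} also return, for every new element $y\in C_j^h\setminus C_{j-1}^h$, a witness $(z,x)$ with $z\in C_{j-1}^h$, $x\in X_j^h$, $z+x=y$ (these sparse-sumset routines can be augmented to report witnesses, or one appends an Alon--Naor witness-finding pass as in the proof of Lemma~\ref{lem:KX}), and maintain a \emph{skip pointer} $\mathrm{prev}(y)$: if $y$ already lay in $C_{j-1}^h$ then $\mathrm{prev}(y)$ is inherited unchanged, otherwise $\mathrm{prev}(y):=(j,z)$; elements of $A$ get a null pointer. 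Given $y\in\ANS$ (with the $h$ for which $y\in C_m^h$ recorded), repeatedly following $\mathrm{prev}$ jumps directly from one nonzero step to the next, emitting one element of $X$ per jump and terminating at some $a\in A$; the number of jumps is exactly $|S|$, each costing $\OO(1)$. The main obstacle I foresee is precisely this reporting bound: following the chain naively would cost $\Theta(m)=\Theta(k^2)$ per query, so the skip-pointer bookkeeping---together with the ability to extract witnesses from the clipped-sumset subroutine---is what is genuinely needed. The only other delicate point is verifying that an off-the-shelf splitter with bucket count $\Theta(k^2)$, family size $k^{O(1)}\log n$, and an $\OO(k^{O(1)}n)$-time construction is available; this additive cost, like that of reading the input, stays within budget in the regime where the lemma is applied, and the $k^{O(1)}$ factor is kept honest by the splitter's size.
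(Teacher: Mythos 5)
Your proof is correct and follows essentially the same route as the paper: derandomize Bringmann's color-coding via an explicit $(n,k,\Theta(k^2))$-perfect hash family (the paper cites Alon--Yuster--Zwick, you cite Naor--Schulman--Srinivasan splitters; same object), chain the $\Theta(k^2)$ color classes through Lemma~\ref{lem:clipped:sumset}, take the union over the family, and extract witnesses with Alon--Naor. The one place where you go beyond the paper's (terser) writeup is the skip-pointer bookkeeping: the paper simply invokes witness-finding without explaining how to avoid the $\Theta(k^2)$ cost of walking the full chain, whereas your $\mathrm{prev}(\cdot)$ pointers genuinely realize the claimed $\OO(|S|)$ per-query reporting bound, which is a useful detail to have spelled out.
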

\begin{proof}
There exists a family ${\cal H}$ of $O(k^{O(1)}\log n)$ ``perfect''
hash functions from $X$ to $\{1,\ldots,k^2\}$ satisfying the following property:
for every subset $S\subseteq X$ with $|S|\le k$, there exists $h\in{\cal H}$ such that
$\{h(x): x\in S\}$ are all distinct.  The family can be constructed in
$O(k^{O(1)}\log^{O(1)}n)$ time.  E.g., see \cite[Section~4]{AlonYZ95}.

Fix $h\in{\cal H}$.
For each $j\in \{1,\ldots,k^2\}$, let $X_{h,j}=\{x\in X: h(x)=j\}$.
Compute the clipped sumset
\[ \ANS_h \:=\: (A + (X_{h,1}\cup\{0\}) + \cdots + (X_{h,k^2}\cup\{0\})) \cap [t]
\]
by applying Lemma~\ref{lem:clipped:sumset} $O(k^2)$ times in
$\OO(k^{O(1)}|(A+\SUMS(X))\cap[t]|\sqrt{n})$ time.
Then $\ANS=\bigcup_{h\in{\cal H}}\ANS_h$ fulfills the output requirement.
For any given $y\in \ANS$, we can report a corresponding number $a\in A$ and 
subset $S\subseteq X$ again using the deterministic
witness-finding technique by Alon and Naor~\cite{AlonN96}.
%
\end{proof}

\mysubsection{Halver}

As before, the key idea is the construction of a halver.  We now need a ``weighted'' variant
that halves the sum of a subset rather than its cardinality:

\begin{definition}
Given a set $X$ of positive integers, and numbers $t$ and $\Delta$,
a \emph{weighted $(t,\Delta)$-halver} of $X$ is a partition of $X$ into two disjoint subsets $X'$ and $X''$ with the following property:
\begin{quote}
For every subset $S\subseteq X$ with $\SUM(S)\le t$, 
there exists a subset $\hatS\subseteq X$ with $\SUM(\hatS)=\SUM(S)$,
and $\SUM(\hatS\cap X'),\SUM(\hatS\cap X'')\le \SUM(\hatS)/2 + \Delta$.
\end{quote}
\end{definition}

As before, we will generate a collection of small canonical subsets,
to which we will then apply Lemma~\ref{lem:discrep} to construct a halver.
Unforunately, Koiliaris and Xu's algorithm is no longer applicable, since it is not
output-sensitive.  Instead, we will use the algorithm from Lemma~\ref{lem:color:code} for small $k$ for this
purpose.
However, it is not clear how canonical subsets can be obtained from this algorithm, since it does
not work via divide-and-conquer or dyadic intervals.
Our new idea is to invoke the algorithm for Lemma~\ref{lem:color:code} multiple times on different
inputs associated with different dyadic intervals.  Bounding the total cost of these multiple
invocations may at first appear problematic, but luckily the following simple observation comes to the rescue:

\begin{observation}\label{obs}
Let $X$ be a sorted list of numbers.
Suppose $X$ is partitioned into contiguous sublists $X_1,\ldots,X_m$.
Then $|\SUMS_{\le k}(X_1)\cap [t]|+\cdots + |\SUMS_{\le k}(X_m)\cap [t]| \le O(k)|\SUMS(X)\cap[t]|$.
\end{observation}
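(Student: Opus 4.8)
The plan is to split by cardinality and to exploit that a sorted \emph{set}, cut into contiguous blocks, can never realize the same $i$-element sum in two different blocks. Since $\SUMS_{\le k}(X_j)\cap[t]=\bigcup_{i=0}^{k}\bigl(\SUMS_i(X_j)\cap[t]\bigr)$, a union bound gives
\[ \sum_{j=1}^{m}\bigl|\SUMS_{\le k}(X_j)\cap[t]\bigr|\ \le\ \sum_{i=0}^{k}\sum_{j=1}^{m}\bigl|\SUMS_i(X_j)\cap[t]\bigr|, \]
so it suffices to show $\sum_{j=1}^{m}|\SUMS_i(X_j)\cap[t]|\le|\SUMS(X)\cap[t]|$ for each fixed $i\in[k]$; summing these $k+1$ inequalities then yields $(k+1)\,|\SUMS(X)\cap[t]| = O(k)\,|\SUMS(X)\cap[t]|$. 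Before starting I would assume, without loss of generality, that every $X_j$ is nonempty and --- after deleting from $X$ all elements exceeding $t$, which changes neither side since a subset using such an element has sum exceeding $t$ --- that $X\subseteq[t]$.

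For a fixed $i\ge 1$ I would show that $\SUMS_i(X_1),\dots,\SUMS_i(X_m)$ are pairwise disjoint. Since $X$ is sorted and the $X_j$ are contiguous sublists of distinct integers, whenever $j<j'$ every element of $X_j$ is strictly smaller than every element of $X_{j'}$; hence any $i$-element subset of $X_j$ has sum at most $i\cdot\max(X_j)$, whereas any $i$-element subset of $X_{j'}$ has sum at least $i\cdot\min(X_{j'})>i\cdot\max(X_j)$, so $\SUMS_i(X_j)\cap\SUMS_i(X_{j'})=\emptyset$. As each $\SUMS_i(X_j)\subseteq\SUMS_i(X)\subseteq\SUMS(X)$, disjointness converts the sum of sizes into the size of a single subset of $\SUMS(X)$, giving $\sum_j|\SUMS_i(X_j)\cap[t]| = \bigl|\bigcup_j\SUMS_i(X_j)\cap[t]\bigr|\le|\SUMS(X)\cap[t]|$.

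The case $i=0$ is where this clean argument degenerates, and is the only point requiring care: here $\SUMS_0(X_j)=\{0\}$ for every $j$, so the inner sum is simply $m$, which is not obviously bounded by $|\SUMS(X)\cap[t]|$. But since $X\subseteq[t]$ and the $m$ nonempty blocks partition $X$, the set $\{0\}\cup X$ consists of $|X|+1\ge m+1$ distinct elements of $\SUMS(X)\cap[t]$, so $m<|\SUMS(X)\cap[t]|$, completing the argument. Everything else is routine; the subtleties worth flagging are that distinctness of the elements is genuinely used (for a multiset the statement is false --- e.g.\ $n$ singleton copies of the value $1$ make the left side $\Theta(n)$ while the right side is only $O(kt)$) and that the degenerate $i=0$ term must be bounded separately.
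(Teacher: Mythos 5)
Your proof follows the same route as the paper's: decompose by cardinality $i$, argue that $\SUMS_i(X_1),\ldots,\SUMS_i(X_m)$ are pairwise disjoint because the sums strictly increase from block to block (so $\sum_j|\SUMS_i(X_j)\cap[t]|\le|\SUMS_i(X)\cap[t]|$), and sum over $i\in[k]$. The genuine value in your write-up is the explicit treatment of $i=0$: the paper's proof asserts disjointness ``for a fixed $i$'' and then sums over $i\in[k]$, but since $[k]$ includes $0$ and all the $\SUMS_0(X_j)$ equal $\{0\}$, the disjointness claim and the inequality $\sum_j|\SUMS_0(X_j)\cap[t]|\le|\SUMS_0(X)\cap[t]|=1$ both fail for $i=0$; the true contribution of that term is $m$, not $1$. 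Your fix---bound $m$ by $n<|\SUMS(X)\cap[t]|$ using $\{0\}\cup X\subseteq\SUMS(X)\cap[t]$---is the right one, and it makes visible what the paper leaves implicit: the observation really needs $X$ to consist of distinct elements that are all at most $t$ (which holds wherever the paper actually invokes it, since the elements there lie below the clipping threshold). Your multiset counterexample is also correct. One small nit: your two without-loss-of-generality normalizations interact---deleting elements exceeding $t$ can empty a block, which conflicts with simultaneously assuming every $X_j$ is nonempty---so it is cleaner to state $X\subseteq[t]$ and nonemptiness directly as hypotheses (as the application warrants) rather than as reductions, and then bound the $i=0$ term $m$ by $n$ outright.
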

\begin{proof}
For a fixed $i$, $\SUMS_i(X_1),\ldots,\SUMS_i(X_m)$ are disjoint,
since every element in $\SUMS_i(X_j)$ is smaller than every element in $\SUMS_i(X_{j+1})$ for every $j$.
Thus, $|\SUMS_i(X_1)\cap [t]|+\cdots + |\SUMS_i(X_m)\cap [t]| \le |\SUMS_i(X)\cap[t]|$.
The observation then follows by summing over all $i\in [k]$.
\end{proof}

\begin{lemma}[Canonical subset generation]\label{lem:collect:os}
Given a constant $\delta\in (0,1]$, a set $X$ of $n$ integers in $[u/(1+\delta),u]$, and numbers $t$ and $b$,
we can generate a collection $\SSS$ of $\OO(b|\SUMS(X)\cap [t]|)$ ``canonical'' subsets of $X$ each of size at most $(1+\delta)b$, in $\OO(b^{O(1)}|\SUMS(X)\cap [t]|\sqrt{n})$ deterministic time, satisfying the following property:
\begin{quote}
For every subset $S\subseteq X$ with $\SUM(S)\le t$,
there exists a subset $\hatS\subseteq X$ such that $\SUM(\hatS)=\SUM(S)$ and 
$\hatS$ is expressible as a union of $O((|S|/b + 1)\log n)$ disjoint canonical subsets in $\SSS$.
\end{quote}
\end{lemma}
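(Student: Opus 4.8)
The plan is to imitate the proof of Lemma~\ref{lem:collect}, but to replace the value‑based dyadic decomposition (which relied on Koiliaris and Xu's algorithm) by an \emph{index}-based dyadic decomposition, with the small‑$k$ computation carried out by the output‑sensitive color‑coding algorithm of Lemma~\ref{lem:color:code}. First I would sort $X=(x_1\le\cdots\le x_n)$ and, for every dyadic interval $I\subseteq(0,n]$, put $X_I=\{x_j:j\in I\}$; there are $O(n)$ such intervals, in $O(\log n)$ levels. For each $I$ I would run the algorithm of Lemma~\ref{lem:color:code} on $X_I$ with $A=\{0\}$, parameter $k=b$, and clipping target $t'':=\min(bu,t)$ (for the degenerate case $t<u$, clip at $\max(u,t'')$ instead; then only $\{0\}$ and singletons arise and the analysis below is unaffected). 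This yields $\ANS_I$ with $\SUMS_{\le b}(X_I)\cap[t'']\subseteq\ANS_I\subseteq\SUMS(X_I)\cap[t'']$; for each $y\in\ANS_I$ I would record, via the witness‑finding step of that lemma, one subset $S_I[y]\subseteq X_I$ with $\SUM(S_I[y])=y$, and add all these subsets to $\SSS$. Since every element of $X_I$ is at least $u/(1+\delta)$ and $y\le t''\le bu$, each recorded subset automatically satisfies $|S_I[y]|\le (1+\delta)y/u\le(1+\delta)b$; this is exactly where the narrow‑range hypothesis $X\subseteq[u/(1+\delta),u]$ is used, and it is also why canonical subsets have size $(1+\delta)b$ rather than exactly $b$.

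To bound $|\SSS|$ and the running time I would sum over the dyadic levels. Fix a level $\ell$; the level‑$\ell$ dyadic intervals partition $(0,n]$ into contiguous sublists of the sorted $X$. For each such $I$, $\ANS_I\subseteq\SUMS(X_I)\cap[t'']\subseteq\SUMS_{\le\lceil(1+\delta)b\rceil}(X_I)\cap[t]$ (again because a subset of $X_I$ with sum at most $t''\le bu$ has at most $(1+\delta)b$ elements), so Observation~\ref{obs} with $k=\lceil(1+\delta)b\rceil$ gives $\sum_{I\text{ at level }\ell}|\ANS_I|\le O(b)\,|\SUMS(X)\cap[t]|$. Summing over the $O(\log n)$ levels yields $|\SSS|=\OO(b\,|\SUMS(X)\cap[t]|)$, and materializing each recorded subset costs $\OO(b)$. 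For the time, Lemma~\ref{lem:color:code} on $X_I$ costs $\OO(b^{O(1)}|\SUMS(X_I)\cap[t'']|\sqrt n)$; summing over a level and using the same counting bound gives $\OO(b^{O(1)}\sqrt n\cdot b\,|\SUMS(X)\cap[t]|)$ per level, hence $\OO(b^{O(1)}|\SUMS(X)\cap[t]|\sqrt n)$ overall (the logarithmic number of levels and the $\OO(n)$ sorting cost are absorbed into $\OO$, as is the $\OO(b^2|\SUMS(X)\cap[t]|)$ materialization cost).

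Finally, to verify the canonical‑subset property, given any $S\subseteq X$ with $\SUM(S)\le t$, I would partition $(0,n]$ into the collection ${\cal I}(S)$ of dyadic intervals obtained by the usual top‑down rule: visit the root, and whenever a visited interval contains more than $b$ indices $j$ with $x_j\in S$, recurse into its two children. As in Lemma~\ref{lem:collect}, $O(|S|/b+1)$ intervals are visited per level, so $|{\cal I}(S)|=O((|S|/b+1)\log n)$, and every $I\in{\cal I}(S)$ has $|S\cap X_I|\le b$. For such an $I$, $y_I:=\SUM(S\cap X_I)$ satisfies $y_I\le\SUM(S)\le t$ and $y_I\le|S\cap X_I|\cdot u\le bu$, hence $y_I\le t''$, so $y_I\in\SUMS_{\le b}(X_I)\cap[t'']\subseteq\ANS_I$ and the recorded canonical subset $S_I[y_I]\in\SSS$ has $\SUM(S_I[y_I])=y_I$. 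Setting $\hatS=\bigcup_{I\in{\cal I}(S)}S_I[y_I]$, the pieces are disjoint because the index‑intervals are disjoint and $S_I[y_I]\subseteq X_I$, and $\SUM(\hatS)=\sum_{I}y_I=\sum_{I}\SUM(S\cap X_I)=\SUM(S)$, as required. (Note that, unlike in Lemma~\ref{lem:collect}, we do not claim $|\hatS|=|S|$, only equality of sums, which is all the weighted halver needs.)

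The main obstacle — the one the paper itself flags — is that the color‑coding algorithm is not a divide‑and‑conquer, so a single invocation yields no usable canonical building blocks; one is forced to invoke it once per dyadic index‑interval, and a priori the total cost of these $\Theta(n)$ invocations could blow up. The resolution is Observation~\ref{obs}: for a fixed level the inputs $X_I$ form a contiguous partition of the sorted $X$, so the output sizes $|\SUMS_{\le k}(X_I)\cap[t]|$ (hence the per‑invocation costs and the contributions to $|\SSS|$) add up to only $O(k)\,|\SUMS(X)\cap[t]|$ instead of exploding. A secondary point to get right is the clipping target: clipping each invocation at $t$ alone would be fatal (the counting bound would degrade to $O(t/u)\,|\SUMS(X)\cap[t]|$), so one must clip at roughly $bu$ — which is legitimate precisely because each piece $S\cap X_I$ in the decomposition has at most $b$ elements and therefore sum at most $bu$.
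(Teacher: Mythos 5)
Your proof is correct and matches the paper's own argument essentially step for step: index-based dyadic intervals over the sorted $X$, invoking Lemma~\ref{lem:color:code} on each $X_I$ with $k=b$ and clipping at $\min\{t,bu\}$, using the narrow-range hypothesis to bound subset sizes by $(1+\delta)b$, and Observation~\ref{obs} to control the total output size and running time across a level. (Your choice $A=\{0\}$ is in fact the right reading of the paper's ``$A=\emptyset$''; your extra clause about clipping at $\max(u,t'')$ in the degenerate case $t<u$ is unnecessary and best omitted, since clipping at $\min\{t,bu\}$ already works there.)
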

\begin{proof}
W.l.o.g., assume that $n$ is a power of 2.  Sort the elements $x_1,\ldots,x_n$ of $X$.
For each dyadic interval $I=(a,a+\ell]\subseteq (0,n]$, run the algorithm in Lemma~\ref{lem:color:code}
with $A=\emptyset$, $k=b$, and $t$ replaced by $\min\{t,bu\}$, on the input set $X_I := \{x_j: j\in I\}$, to get output set $\ANS_I$.
For each dyadic interval $I$ and each $y\in \ANS_I\subseteq\SUMS(X_I)\cap[\min\{t,bu\}]$,
find one subset $S_I[y]\subseteq X_I$ with $\SUM(S_I[y])=y$, and add the subset to the collection $\SSS$.
Note that $|S_I[y]|\le bu/(u/(1+\delta))=(1+\delta)b$.
By Observation~\ref{obs}, the sum of $|\SUMS(X_I)\cap [\min\{t,bu\}]|=|\SUMS_{\le (1+\delta)b}(X_I)\cap [\min\{t,bu\}]|$ over all dyadic intervals $I$ of a fixed length
is at most $O(b|\SUMS(X)\cap [\min\{t,bu\}]|)$; thus, the sum over all dyadic intervals $I$ of all lengths
is $O(b|\SUMS(X)\cap [\min\{t,bu\}]|\log n)$.  
So, the total number of subsets in $\SSS$ is bounded by $O(b|\SUMS(X)\cap [t]|\log n)$,
and the computation time is $\OO(b^{O(1)}|\SUMS(X)\cap [t]|\sqrt{n})$.

To prove that $\SSS$ satisfies the stated property, consider an arbitrary
subset $S\subseteq X$. First partition $(0,n]$ into a collection ${\cal I}(S)$ of disjoint dyadic intervals, each containing at most
$b$ indices of $\{j:x_j\in S\}$: namely, we start by visiting the root dyadic interval, and when we visit a dyadic interval containing
more than $b$ indices of $\{j:x_j\in S\}$, we recursively visit its two child intervals.
The number of dyadic intervals visited per level is $O(|S|/b+1)$.  Thus, the number
of dyadic intervals generated is $|{\cal I}(S)|\le O((|S|/b+1)\log n)$.
For each interval $I\in {\cal I}(S)$, let $y_I=\SUM(S\cap X_I)\in \SUMS_{\le b}(X_I)\cap[\min\{t,bu\}]\subseteq \ANS_I$.
Define $\hatS$ to be the union of $S_I[y_I]$ over all $I\in {\cal I}(S)$; these sets $S_I[y_I]\subseteq X_I$ are obviously disjoint because the intervals in ${\cal I}(S)$ are disjoint; and 
$\SUM(S_I[y_I])=\SUM(S\cap X_I)$.
Therefore, $\SUM(\hatS)=\SUM(S)$.
\end{proof}

\begin{lemma}[Fast halver construction]\label{lem:halver:os}
Given a set $X$ of $n$ integers in $[u]$ and numbers $t$ and $b$,
we can compute a weighted $(t,\OO(t/\sqrt{b} + bu))$-halver of $X$ in $\OO(b^{O(1)}|\SUMS(X)\cap[t]|\sqrt{n})$ deterministic time.
\end{lemma}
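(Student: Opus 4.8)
My plan is to reuse the three-move template behind Lemma~\ref{lem:halver} (generate a collection of small canonical subsets, then invoke the discrepancy bound of Lemma~\ref{lem:discrep}), but with one extra move forced by the fact that Lemma~\ref{lem:collect:os} only accepts inputs confined to a narrow value band $[u/(1+\delta),u]$. So first I would partition $X$ by value. The subtlety is how narrow to make the bands: a constant ratio would let the ``halving the cardinality $\approx$ halving the sum'' approximation inside a band carry a constant multiplicative slack, which after summation inflates the additive error to $\Theta(t)$. To hold it down to $\OO(t/\sqrt b)$ I would take bands of ratio $1+1/\sqrt b$, i.e.\ $X^{(j)} = X\cap (u_j/(1+1/\sqrt b),\, u_j]$ with $u_j = u\,(1+1/\sqrt b)^{-(j-1)}$; there are $m = O(\sqrt b\log u) = \OO(\sqrt b)$ of them.

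Next I would apply Lemma~\ref{lem:collect:os} to each $X^{(j)}$, with $\delta=1$ (legitimate since $X^{(j)}\subseteq[u_j/2,u_j]$; the genuinely narrower band is used only later, in the analysis), obtaining a collection $\SSS^{(j)}$ of $\OO(b\,|\SUMS(X^{(j)})\cap[t]|)$ canonical subsets of $X^{(j)}$, each of size $\le 2b$, and set $\SSS=\bigcup_j\SSS^{(j)}$. Because $X^{(j)}\subseteq X$ forces $\SUMS(X^{(j)})\subseteq\SUMS(X)$, this gives $|\SSS| = \OO(b^{3/2}\,|\SUMS(X)\cap[t]|)$, and summing the running times of Lemma~\ref{lem:collect:os} (with $|X^{(j)}|\le n$) the construction costs $\OO(b^{O(1)}|\SUMS(X)\cap[t]|\sqrt n)$. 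I would then feed $\SSS$ to Lemma~\ref{lem:discrep} with its parameter ``$k$'' set to $2b$; since $\log|\SSS|=\OO(1)$, this yields a partition $(X',X'')$ of $X$ with $|S_\ell\cap X'|,|S_\ell\cap X''|\le |S_\ell|/2+\OO(\sqrt b)$ for every $S_\ell\in\SSS$, in time $O(b\,|\SSS|+|X|)$, which stays within budget (the additive $O(|X|)$ is harmless, as one may assume $|X|=\OO(|\SUMS(X)\cap[t]|)$ — e.g.\ when $X$ is a set with all elements $\le t$ then $|\SUMS(X)\cap[t]|\ge|X|$).

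For correctness I would take any $S\subseteq X$ with $\SUM(S)\le t$, apply the property of Lemma~\ref{lem:collect:os} inside each band to $S\cap X^{(j)}$ to get $\hatS^{(j)}\subseteq X^{(j)}$ of the same sum, written as a disjoint union of $O((|S\cap X^{(j)}|/b+1)\log n)$ members of $\SSS^{(j)}$, and set $\hatS=\bigcup_j\hatS^{(j)}$ (disjoint across $j$), so $\SUM(\hatS)=\SUM(S)$. For a single canonical piece $S_\ell\subseteq X^{(j)}$, the identity $\SUM(S_\ell\cap X')-\SUM(S_\ell)/2 = (\SUM(S_\ell\cap X')-\SUM(S_\ell\cap X''))/2$, together with the bounds $u_j/(1+1/\sqrt b)\le x\le u_j$ on its elements $x$, shows its imbalance is at most $\OO(\sqrt b)\,u_j$ (from the cardinality discrepancy) plus an $O(1/\sqrt b)$-relative term $O(\SUM(S_\ell)/\sqrt b)$ (from the band's ratio slack). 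Summing over all pieces of $\hatS$: the relative terms add up to at most $O(\SUM(\hatS)/\sqrt b)\le\OO(t/\sqrt b)$; and the $\OO(\sqrt b)\,u_j$ terms add up to $\OO(\sqrt b)\sum_j(|S\cap X^{(j)}|/b+1)(\log n)\,u_j$, which I would bound using $\sum_j |S\cap X^{(j)}|\,u_j\le 2\,\SUM(S)\le 2t$ (each element of $X^{(j)}$ is $\ge u_j/2$) and $\sum_j u_j = O(\sqrt b\,u)$ (geometric series of ratio $1+1/\sqrt b$), giving $\OO(\sqrt b)(t/b + \sqrt b\,u)=\OO(t/\sqrt b + bu)$. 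Hence $\SUM(\hatS\cap X'),\SUM(\hatS\cap X'')\le\SUM(\hatS)/2+\OO(t/\sqrt b+bu)$, so $(X',X'')$ is the desired weighted halver.

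The step I expect to be the real obstacle is the calibration in the first paragraph. The band ratio has to be small enough, $1+\Theta(1/\sqrt b)$, that the multiplicative slack only costs $\OO(t/\sqrt b)$ of additive error, yet large enough that there are only $\OO(\sqrt b)$ bands --- and that count matters twice, because a geometric series of ratio $1+\Theta(1/\sqrt b)$ inflates $\sum_j u_j$ by a factor $\Theta(\sqrt b)$, and it is exactly this $\sqrt b\cdot\sqrt b = b$ product (against the $u$-scale per-band imbalance) that produces the $bu$ term. Landing both halves of the error precisely on $\OO(t/\sqrt b + bu)$ is the crux; the rest is routine bookkeeping, modulo the standard reduction that keeps $n=\OO(|\SUMS(X)\cap[t]|)$ so that the $O(|X|)$ term in Lemma~\ref{lem:discrep} is within the stated time bound.
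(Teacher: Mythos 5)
Your proof is correct and follows essentially the same three-step template as the paper's (partition $X$ into geometric value bands of ratio $1+\Theta(1/\sqrt b)$, generate canonical subsets per band via Lemma~\ref{lem:collect:os}, then apply Lemma~\ref{lem:discrep} to the union and sum the per-piece imbalances). The one cosmetic difference is that you pass $\delta=1$ to Lemma~\ref{lem:collect:os} and exploit the narrower band only in the error analysis, whereas the paper threads $\delta=1/\sqrt b$ through the subroutine call itself; both choices yield canonical pieces of size $O(b)$ and the same $\OO(t/\sqrt b + bu)$ error, so the arguments are equivalent.
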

\begin{proof}
Let $\delta=1/\sqrt{b}$.
For each $\ell\in [\lceil\log_{1+\delta}u\rceil]$,
let $X^{(\ell)}=X\cap [(1+\delta)^\ell,(1+\delta)^{\ell+1})$.
Apply Lemma~\ref{lem:collect:os} to the input set $X^{(\ell)}$ (with $u$ replaced by $(1+\delta)^{\ell+1}$)
to construct a collection $\SSS^{(\ell)}$ of $\OO(b|\SUMS(X)\cap[t]|)$ canonical subsets
in $\OO(b^{O(1)}|\SUMS(X)\cap[t]|\sqrt{n})$ time.
Let $\SSS=\bigcup_{\ell}\SSS^{(\ell)}$.
Then apply Lemma~\ref{lem:discrep} to this collection $\SSS$, with $k=(1+\delta)b$,
to compute a partition $(X',X'')$ in $\OO(b^{O(1)}|\SUMS(X)\cap[t]|)$ time.

To prove that $(X',X'')$ satisfies the weighted $(t,\OO(t/\sqrt{b}+bu))$-halver property,
consider an arbitrary subset $S\subseteq X$ with $\SUM(S)\le t$.
Let $\hatS^{(\ell)}\subseteq X^{(\ell)}$ be such that  $\SUM(\hatS^{(\ell)})=\SUM(S\cap X^{(\ell)})$ and
$\hatS^{(\ell)}$ is expressible as a union of $O((|S\cap X^{(\ell)}|/b + 1)\log n)$ disjoint canonical subsets $S_j^{(\ell)}\in\SSS^{(\ell)}$.
We have $|S_j^{(\ell)}\cap X'|, |S_j^{(\ell)}\cap X''|\le  |S_j^{(\ell)}|/2 + \OO(\sqrt{b})$ for each such subset $S_j$.
Summing over $S_j^{(\ell)}$ gives 
\begin{eqnarray*}
 \max\{|\hatS^{(\ell)}\cap X'|, |\hatS^{(\ell)}\cap X''|\} &\le& |\hatS^{(\ell)}|/2 + \OO((|S\cap X^{(\ell)}|/b+1)\cdot\sqrt{b})\\[2pt]
  &=& |\hatS^{(\ell)}|/2 + \OO(|S\cap X^{(\ell)}|/\sqrt{b}+\sqrt{b}).
\end{eqnarray*}
Let $\hatS=\bigcup_\ell \hatS^{(\ell)}$.  Then $\SUM(\hatS)=\SUM(S)$.
Now, $\frac{1}{1+\delta}\SUM(\hatS\cap X')\le \sum_\ell (1+\delta)^\ell |\hatS^{(\ell)}\cap X'|$,
and $\frac{1}{1+\delta}\SUM(\hatS\cap X'')\le \sum_\ell (1+\delta)^\ell |\hatS^{(\ell)}\cap X''|$.
Thus, 
\begin{eqnarray*}
 \max\{\SUM(\hatS\cap X'),\SUM(\hatS\cap X'')\} &\le& 
 \sum_\ell (1+\delta)^\ell \left(|\hatS^{(\ell)}|/2 + \OO(|S\cap X^{(\ell)}|/\sqrt{b}+\sqrt{b})\right) + O(\delta t)\\
 &\le &
 \SUM(\hatS)/2 +   \OO\left(\SUM(S)/\sqrt{b} +
 \sum_\ell (1+\delta)^\ell \sqrt{b}\right) + O(\delta t)\\
 & =& \SUM(\hatS)/2 +   \OO(t/\sqrt{b} 
 + \sqrt{b}u/\delta + \delta t)\ =\ \SUM(\hatS)/2 + \OO(t/\sqrt{b} + bu).\PAPER{\ \ \ \ \ }
\end{eqnarray*}

\ \PAPER{\par\vspace{-2\bigskipamount}}
\end{proof}

\mysubsection{Putting everything together}

We are now ready to derandomize Bringmann, Fischer, and Nakos' algorithm~\cite{BringmannFN25}.
The algorithm is a multi-way divide-and-conquer, so we first extend halvers to split into $r$ parts:

\begin{definition}
Given a set $X$ of positive integers, and numbers $t$ and $\Delta$,
a \emph{weighted $r$-way $(t,\Delta)$-splitter} of $X$ is a partition of $X$ into $r$ disjoint subsets $X_1,\ldots,X_r$  with the following property:
\begin{quote}
For every subset $S\subseteq X$ with $\SUM(S)\le t$, 
there exists a subset $\hatS\subseteq X$ with $\SUM(\hatS)=\SUM(S)$,
and $\SUM(\hatS\cap X_j)\le \SUM(\hatS)/r + \Delta$ for each $j\in\{1,\ldots,r\}$.
\end{quote} 
\end{definition}

\begin{corollary}[Fast $r$-way splitter construction]\label{cor:splitter:os}
Given a set $X$ of $n$ integers in $[u]$ and numbers $t,b,r$,
we can compute a weighted $r$-way $(t,\OO((t/\sqrt{b} + bu)\log r))$-splitter of $X$ in $\OO(rb^{O(1)}|\SUMS(X)\cap[t]|\sqrt{n})$ deterministic time.
\end{corollary}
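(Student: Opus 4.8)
The plan is to build the $r$-way splitter by recursively applying the weighted $2$-way halver of Lemma~\ref{lem:halver:os}, in a perfectly balanced binary recursion tree of depth $\log_2 r$. We may assume $r$ is a power of~$2$ (for general $r$, round up to the next power of~$2$ and merge parts, which affects only the relevant constants). At the root we invoke Lemma~\ref{lem:halver:os} on $X$ with the given $t,b$ (and $u$, noting $X\subseteq[u]$) to obtain a weighted $(t,\Delta_0)$-halver $(X',X'')$ with $\Delta_0=\OO(t/\sqrt{b}+bu)$; we then recurse on $X'$ and on $X''$, reusing the same parameters $t,b$ (each is again a set of integers in $[u]$). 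The $r$ leaves of the recursion tree are the output parts $X_1,\dots,X_r$.

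For correctness, fix $S\subseteq X$ with $\SUM(S)\le t$. First I would dispose of a degenerate case: we may assume the claimed additive error $\Delta:=\OO((t/\sqrt{b}+bu)\log r)$ is at most $t$, since otherwise the trivial partition $(X,\emptyset,\dots,\emptyset)$ already satisfies the splitter property with $\hatS=S$; in particular this lets us assume $\Delta_0\le t/4$. Now trace $S$ down the tree: at a node $v$ carrying a subset $T_v\subseteq X_v$, the halver at $v$ applies provided $\SUM(T_v)\le t$, and yields $\widehat{T_v}=(\widehat{T_v}\cap X'_v)\sqcup(\widehat{T_v}\cap X''_v)$ with $\SUM(\widehat{T_v})=\SUM(T_v)$; we pass $\widehat{T_v}\cap X'_v$ and $\widehat{T_v}\cap X''_v$ to the two children. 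Writing $M_d$ for the largest sum a node's subset can have at depth~$d$, we get $M_0=\SUM(S)$ and $M_d\le M_{d-1}/2+\Delta_0$, so $M_d<\SUM(S)/2^d+2\Delta_0\le t/2+t/2=t$ for all $d\ge 1$, which guarantees every halver along the way is applicable. Setting $\hatS$ to be the disjoint union of the subsets sitting at the leaves, telescoping the identity $\SUM(T_v)=\SUM(\widehat{T_v}\cap X'_v)+\SUM(\widehat{T_v}\cap X''_v)$ from the leaves up gives $\SUM(\hatS)=\SUM(S)$, while $\SUM(\hatS\cap X_j)=\SUM(T_j)\le M_{\log_2 r}<\SUM(S)/r+2\Delta_0=\SUM(\hatS)/r+\OO(t/\sqrt{b}+bu)$, which is within (indeed stronger than) the stated bound.

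For the running time, consider the $2^j$ nodes at depth~$j$; their subsets $X_v$ are disjoint subsets of $X$, so $\SUMS(X_v)\cap[t]\subseteq\SUMS(X)\cap[t]$ and $|X_v|\le n$. Hence each halver call at depth~$j$ costs $\OO(b^{O(1)}|\SUMS(X_v)\cap[t]|\sqrt{|X_v|})\le\OO(b^{O(1)}|\SUMS(X)\cap[t]|\sqrt{n})$ by Lemma~\ref{lem:halver:os}, the total over depth~$j$ is $\OO(2^j b^{O(1)}|\SUMS(X)\cap[t]|\sqrt{n})$, and summing the geometric series over $j=0,\dots,\log_2 r-1$ gives $\OO(r b^{O(1)}|\SUMS(X)\cap[t]|\sqrt{n})$. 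The one genuinely delicate point is this running-time bookkeeping: the fact that $|\SUMS(X_v)\cap[t]|$ can be charged to $|\SUMS(X)\cap[t]|$ for an \emph{arbitrary} subset $X_v$ (unlike Observation~\ref{obs}, this holds for every partition, not just into contiguous sublists, simply because any subset of $X_v$ is a subset of $X$) is exactly what stops the per-level cost from blowing up past $r\,|\SUMS(X)\cap[t]|$; the remaining items—threshold consistency in the recursion and the handling of $r$ not a power of $2$—are routine.
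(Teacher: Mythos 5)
Your proof is correct and follows the same approach the paper intends---the paper's own proof is the one-liner ``just apply Lemma~\ref{lem:halver:os} recursively for $\log r$ levels,'' and your write-up is the honest expansion of that sentence: a balanced binary recursion tree, tracing the witness subset $S$ down to the leaves, and charging each halver call to $|\SUMS(X)\cap[t]|$ via the simple $X_v\subseteq X\Rightarrow\SUMS(X_v)\subseteq\SUMS(X)$ containment. Two small remarks. First, your geometric-decay analysis ($M_d < \SUM(S)/2^d + 2\Delta_0$) actually yields an error of $\OO(t/\sqrt{b}+bu)$ with no $\log r$ factor, so you prove a slightly stronger conclusion than the corollary states; the paper's $\log r$ appears to come from a naive additive accounting of $\Delta_0$ per level and is not tight. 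Second, the one claim you wave away that does not quite go through as stated is ``for general $r$, round up to the next power of two and merge parts, which affects only the relevant constants.'' Merging two leaves of a $2^{\lceil\log_2 r\rceil}$-way splitter produces a part with sum up to $2\SUM(\hatS)/2^{\lceil\log_2 r\rceil}+2\Delta_0$, and since $2^{\lceil\log_2 r\rceil}$ can be as small as $r$, the first term can be as large as $2\SUM(\hatS)/r$; the excess $\SUM(\hatS)/r\le t/r$ need not be dominated by the target error $\OO((t/\sqrt{b}+bu)\log r)$ (e.g., in the eventual application $b=\OO(r^4)$ and $u=u_S$ make $t/r$ far larger). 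In practice this is harmless---the theorem is free to pick $r$ to be a power of two, and the paper's one-line proof glosses over the same point---but ``merge parts'' is not the right fix, so I'd either assume $r$ is a power of two from the outset or note explicitly that it suffices to output $2^{\lceil\log_2 r\rceil}$ parts.
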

\begin{proof}
Just apply Lemma~\ref{lem:halver:os} recursively for $\log r$ levels.
\end{proof}

\begin{theorem}
Given a set $X$ of $n$ integers in $[t]$,
we can compute $\SUMS(X)\cap[t]$ in $\OO(|\SUMS(X)\cap[t]|\sqrt{n})$ deterministic time.
\end{theorem}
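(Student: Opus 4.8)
The plan is to derandomize the multi-way divide-and-conquer of Bringmann, Fischer, and Nakos~\cite{BringmannFN25} by using the deterministic weighted splitter of Corollary~\ref{cor:splitter:os} in place of their random partitioning. As a first step I would reduce to the single-magnitude case exactly as in the proof of the first theorem: partition $X$ into the geometric classes $X_i := X\cap(t/2^i,t/2^{i-1}]$ for $i=1,\dots,\lceil\log t\rceil$, note that any subset summing to at most $t$ contains fewer than $2^i$ elements of $X_i$ (so $\SUMS(X_i)\cap[t]=\SUMS_{\le 2^i}(X_i)\cap[t]$), compute for each $i$ a set $\ANS_i$ with $\SUMS_{\le 2^i}(X_i)\cap[t]\subseteq\ANS_i\subseteq\SUMS(X_i)$, and return $(\ANS_1+\ANS_2+\cdots)\cap[t]$ using $O(\log t)$ output-sensitive clipped sumsets (Lemma~\ref{lem:clipped:sumset}). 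The point of the magnitude grouping here is that the elements of a fixed class lie within a factor of $2$ of one another, so a \emph{weighted} (sum-balanced) splitter of $X_i$ simultaneously roughly balances cardinalities; hence reducing the target by a factor $r$ also reduces the effective cardinality bound by a factor $\approx r$, which is what lets us eventually reach the regime where color-coding is cheap.

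For a fixed class I would then run a divide-and-conquer analogous to Lemma~\ref{lem:DC}. At a node with current set $Y\subseteq X_i$ (elements $\approx u$) and current target $t'$: if the effective cardinality bound $\approx t'/u$ is already polylogarithmic, solve the subproblem directly with the color-coding routine of Lemma~\ref{lem:color:code} (with $A=\{0\}$ and that polylogarithmic value of $k$); otherwise build a weighted $r$-way $(t',\OO(t'/\log t'))$-splitter of $Y$ via Corollary~\ref{cor:splitter:os}, recurse on the $r$ parts with target $t'/r+\OO(t'/\log t')$, and combine the returned sets with $O(\log r)$ clipped sumsets (Lemma~\ref{lem:clipped:sumset}). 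Here both the branching factor $r$ and the canonical-subset-size parameter $b$ would be polylogarithmic; the splitter's error term $\OO((t'/\sqrt b+bu)\log r)$ stays below $t'/\log t'$ precisely because within a class $t'/u$ equals the effective cardinality bound, which in the recursive (non-base) case is still super-polylogarithmic. Correctness follows by induction exactly as in Lemma~\ref{lem:DC}: the splitter guarantee at each level is stated for \emph{every} subset of the current set, so the promised equivalent subsets $\hatS$ chain together down the recursion.

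For the running time the work splits into (i) splitter constructions, (ii) base-case color-coding calls, and (iii) clipped-sumset combines. Each call in (i) and (ii) costs, on its subproblem, a polylogarithmic factor times (number of relevant outputs) times the square root of (number of elements in the subproblem). The plan for bounding the total is to invoke Observation~\ref{obs}: over all subproblems at a fixed level of the recursion for a fixed class, the output-set sizes sum to $\OO(\mathrm{polylog})\cdot|\SUMS(X)\cap[t]|$; meanwhile the subproblems at a level partition the class, so by Cauchy--Schwarz the sum of the square roots of their sizes is at most $\sqrt{(\#\text{subproblems})\cdot n_i}$, and since $\sum_i\sqrt{n_i}=\OO(\sqrt n)$ this yields an overall $\OO(\sqrt n)$ factor. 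Summing over the $\OO(\log)$ recursion levels and $\OO(\log t)$ classes gives $\OO(|\SUMS(X)\cap[t]|\sqrt n)$; the combines in (iii) should be handled similarly, clipping each combine only to its magnitude-determined range (which is far below $t$ at the lower levels) so that the costs telescope against $|\SUMS(X)\cap[t]|$.

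The main obstacle is exactly the accounting in the previous paragraph, and specifically making Observation~\ref{obs} applicable: that observation bounds a sum of subset-sum-set sizes over \emph{contiguous} sublists of a sorted list, whereas the parts produced by the weighted splitter are not contiguous in sorted order. Handling this is the ``additional new idea'' the paper alludes to, and it is already foreshadowed by Lemma~\ref{lem:collect:os}, which inside the splitter construction works over dyadic \emph{index} intervals so that Observation~\ref{obs} applies there. At the divide-and-conquer level one has to organize the recursion around the sorted order of each class as well --- for instance interleaving the weighted splitting (which drives the effective cardinality bound down, but is set up over index intervals so that the pieces fed to the color-coding routine stay contiguous) with an index-based decomposition into contiguous blocks (which keeps the relevant subset-sum sets essentially disjoint across subproblems, so their sizes telescope against $|\SUMS(X)\cap[t]|$) --- and then check that neither the recursion depth nor the number of subproblems blows up the polylogarithmic overhead. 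Getting this bookkeeping right, rather than any single new subroutine, is where I expect the real work to lie.
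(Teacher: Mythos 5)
Your high-level plan (derandomize Bringmann--Fischer--Nakos by swapping their random partition for the deterministic splitter of Corollary~\ref{cor:splitter:os}) is the right starting point, but the proposal has two genuine gaps, one of which is the actual crux of the proof.

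First, the outer structure. You reduce to $O(\log t)$ geometric magnitude classes and then combine the per-class answers with $O(\log t)$ applications of Lemma~\ref{lem:clipped:sumset}. Each such combine costs $\OO((|A|+|C|)\sqrt{|B|})$; since all three sets can have size up to $\mu := |\SUMS(X)\cap[t]|$, a single combine can cost $\OO(\mu^{3/2})$, which is not bounded by $\OO(\mu\sqrt{n})$ when $\mu\gg n$. The paper avoids this by not doing a full geometric decomposition at the top level: it picks a single threshold $u_S\approx t/(r^6\log^c t)$, splits $X$ into small ($X_S$) and large ($X_L$) elements, runs the splitter-based recursion only on $X_S$, and folds in $X_L$ via the color-coding Lemma~\ref{lem:color:code} with a polylogarithmic $k$ (possible because any feasible subset contains only polylogarithmically many large elements). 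That final fold costs $\OO(\mu\sqrt{n})$, not $\OO(\mu^{3/2})$.

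Second, and more importantly, your accounting for the recursion relies on Observation~\ref{obs}, and that cannot work. You flag the issue that the splitter parts are not contiguous, but even setting that aside, Observation~\ref{obs} gives a multiplicative factor of $O(k)$ where $k$ is the effective cardinality bound, which is super-polylogarithmic in exactly the regime where you recurse; so it would not yield the $\OO(\mathrm{polylog})\cdot\mu$ bound you want. More fundamentally, Observation~\ref{obs} is the right tool only inside the canonical-subset generation (Lemma~\ref{lem:collect:os}), where the pieces \emph{are} contiguous and $k=b$ is small. The tool that controls output growth across levels of the multi-way divide-and-conquer is entirely different: the submultiplicativity inequality of Gyarmati, Matolcsi, and Ruzsa, used by Bringmann, Fischer, and Nakos to show $|(\SUMS(X_1)\cap[t/r(1+1/r)])+\cdots+(\SUMS(X_r)\cap[t/r(1+1/r)])|\le|\SUMS(X_1\cup\cdots\cup X_r)\cap[t]|^{1+1/(r-1)}$, together with $|A_1|+\cdots+|A_r|\le|A_1+\cdots+A_r|+r-1$. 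This is what lets one recurse on the true sets $A_j=\SUMS(X_j)\cap[t/r+t/r^2]$ and still solve the recurrence, after choosing $r=\log^2 t+1$ so that $\mu^{1/(r-1)}=1+O(1/\log t)$. Your proposal never invokes this, and the proposed Cauchy--Schwarz/Observation~\ref{obs} replacement does not bound the quantity that actually needs bounding. The ``interleaving with index-based decomposition'' idea you gesture at is not what the paper does and, as written, does not close the gap.
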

\begin{proof}
Let $u_S= \lceil t/(r^6\log^c t)\rceil$ for a sufficiently large constant $c$.
Let $X_S=X\cap [u_S]$ and $X_L=X\cap (u_S,t]$.
The following is a reinterpretation of 
Bringmann, Fischer, and Nakos' algorithm~\cite{BringmannFN25}:
\begin{enumerate}
\item Compute a weighted $r$-way $(t,t/r^2)$-splitter $(X_1,\ldots,X_r)$ of $X_S$ 
in $\OO(r^{O(1)}|\SUMS(X)\cap[t]|\sqrt{n})$ deterministic time, by applying
Corollary~\ref{cor:splitter:os} with $b=\OO(r^4)$ and $u$ replaced by $u_S$ (the parameters are chosen so that
 $(t/\sqrt{b}+bu_S)\log r \ll t/r^2$).
\item Recursively compute $A_j=\SUMS(X_j)\cap [t/r+t/r^2]$ for each $j\in\{1,\ldots,r\}$.
\item Compute the sumset $A=A_1+\cdots+A_r$ by applying Lemma~\ref{lem:sumset} $O(r)$ times in
$\OO(r|A|)$ deterministic time.
\item Return $(A+\SUMS(X_L))\cap [t]$, which can be computed by applying Lemma~\ref{lem:color:code}
with $k=r^6\log^c t$ in $\OO(r^{O(1)}|\SUMS(X)\cap[t]|\sqrt{n})$ deterministic time (this is because
$\SUMS(X_L)\cap [t] = \SUMS_{\le r^6\log^ct}(X_L)\cap [t]$ when $X_L\subseteq (t/(r^6\log^c t),\infty)$).
\end{enumerate}

Bringmann, Fischer, and Nakos~\cite{BringmannFN25} proved that
\[
|(\SUMS(X_1)\cap [t/r(1+1/r)])+\cdots + (\SUMS(X_r)\cap [t/r(1+1/r)])|\:\le\: |\SUMS(X_1\cup\cdots\cup X_r)\cap[t]|^{1+1/(r-1)}
\]
by using a ``submultiplicativity'' property for sumsets
due to Gyarmati, Matolcsi, and Ruzsa~\cite{GyarmatiMR10}.
Thus, $|A|=|A_1+\cdots+A_r|\le  |\SUMS(X)\cap [t]|^{1+1/(r-1)}$.
Bringmann, Fischer, and Nakos further noted that $|A_1|+\cdots+|A_r|\le |A_1+\cdots+A_r|+r-1$.

Thus, the running time for $\mu=|\SUMS(X)\cap[t]|$ satisfies the recurrence
\begin{eqnarray*}
 T(n,t,\mu) &\le& \!\!\max_{\scriptsize\begin{array}{c}n_1,\ldots,n_r,\mu_1,\ldots,\mu_r:\\ n_1+\cdots+n_r\le n,\\
 \mu_1+\cdots+\mu_r\le \mu^{1+1/(r-1)}+r-1\end{array}} \!\!\!\!\!\!\!\!\!\!
\Big(T(n_1,t/r+t/r^2,\mu_1)+\cdots+T(n_r,t/r+t/r^2,\mu_r)\\[-5ex]
 && \qquad\qquad\qquad\qquad\qquad\qquad\qquad\qquad\qquad\qquad {} + \OO(r^{O(1)}\cdot (\mu^{1+1/(r-1)} + \mu\sqrt{n}))\Big).
\end{eqnarray*}
By choosing $r=\log^2 t + 1$, we have $\mu^{1/(r-1)}\le 2^{\log \mu/\log^2 t}\le 2^{1/\log t} = 1+ O(1/\log t)$, and
the recurrence solves to $T(n,t,\mu)=\OO(\mu\sqrt{n})$.
\end{proof}

To close, we mention that another output-sensitive subset sum algorithm by 
Bringmann and Nakos~\cite{BringmannN20} can be similarly derandomized.  They described a (more complicated) alternative
to Lemma~\ref{lem:clipped:sumset} on output-sensitive construction of clipped sumsets, running in
$\OO(|C|^{4/3})$ time (their construction is deterministic, if a deterministic version of Lemma~\ref{lem:sumset}
is used).  By replacing Lemma~\ref{lem:clipped:sumset}, keeping the rest of the algorithm the same,
and straightforwardly modifying the above analysis, we immediately obtain:

\begin{theorem}
Given a set $X$ of $n$ integers in $[t]$,
we can compute $\SUMS(X)\cap[t]$ in $\OO(|\SUMS(X)\cap[t]|^{4/3})$ deterministic time.
\end{theorem}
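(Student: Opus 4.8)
The plan is to reuse the divide-and-conquer algorithm from the proof of the previous theorem essentially verbatim, changing only one subroutine: in place of Lemma~\ref{lem:clipped:sumset} I would use Bringmann and Nakos' deterministic $\OO(|C|^{4/3})$-time algorithm~\cite{BringmannN20} for the clipped sumset $C=(A+B)\cap[t]$ (which is deterministic once a deterministic instance of Lemma~\ref{lem:sumset} is plugged in, and still supports Alon--Naor-style witness reporting). Then I would trace this improved bound upward through the chain of lemmas. First, in Lemma~\ref{lem:color:code} the set $\ANS_h$ is assembled from $O(k^2)$ clipped sumsets, each contained in $(A+\SUMS(X))\cap[t]$; using the new clipped-sumset routine and summing over the $O(k^2)$ calls and the $O(k^{O(1)}\log n)$ hash functions, the running time of Lemma~\ref{lem:color:code} becomes $\OO(k^{O(1)}\,|(A+\SUMS(X))\cap[t]|^{4/3})$, with no $\sqrt n$ factor.

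Next I would rerun the canonical-subset generation (Lemma~\ref{lem:collect:os}) and the halver and splitter constructions (Lemma~\ref{lem:halver:os}, Corollary~\ref{cor:splitter:os}) with this bound. The one place needing genuine care is the aggregation of the now-superlinear per-interval cost. In Lemma~\ref{lem:collect:os} the total work is $\sum_I\OO(b^{O(1)}\mu_I^{4/3})$ over the dyadic intervals $I$, where $\mu_I=|\SUMS_{\le b}(X_I)\cap[\min\{t,bu\}]|$; here I would invoke the elementary inequality $\sum_i a_i^{4/3}\le(\sum_i a_i)^{4/3}$ (valid because $4/3\ge1$) together with Observation~\ref{obs}, which bounds $\sum_I\mu_I$ by $O(b)\cdot|\SUMS(X)\cap[t]|$ over the intervals of each fixed length; summing over the $O(\log n)$ length scales then yields total cost $\OO(b^{O(1)}|\SUMS(X)\cap[t]|^{4/3})$. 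The number of canonical subsets produced is unchanged ($\OO(b\,|\SUMS(X)\cap[t]|)$, from the linear form of Observation~\ref{obs}), so Lemma~\ref{lem:discrep} still runs in $\OO(b^{O(1)}|\SUMS(X)\cap[t]|^{4/3})$ time; Lemma~\ref{lem:halver:os} and Corollary~\ref{cor:splitter:os} then follow with the same bound, the extra $O(\sqrt b\log u)$ scales and the $\log r$ splitter levels being absorbed into $b^{O(1)}$ and $r^{O(1)}$.

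Finally I would rerun the divide-and-conquer of the previous theorem. Steps 1 and 4 now cost $\OO(r^{O(1)}\mu^{4/3})$ (using the new splitter and the new Lemma~\ref{lem:color:code}, and noting $(A+\SUMS(X_L))\cap[t]\subseteq\SUMS(X)\cap[t]$ has size $\le\mu$), while step 3 costs $\OO(r|A|)$ with $|A|\le\mu^{1+1/(r-1)}$ and $|A_1|+\cdots+|A_r|\le|A|+r-1$ exactly as before, via the submultiplicativity bound of Gyarmati, Matolcsi, and Ruzsa~\cite{GyarmatiMR10}. Since $\mu\le t+1$ and $r=\log^2 t+1$, we have $\mu^{1+1/(r-1)}=O(\mu)=O(\mu^{4/3})$, so the recurrence reads
\[
 T(n,t,\mu)\ \le\ \max\Big(T(n_1,t/r+t/r^2,\mu_1)+\cdots+T(n_r,t/r+t/r^2,\mu_r)\Big)\ +\ \OO(r^{O(1)}\mu^{4/3}),
\]
the maximum over $n_1+\cdots+n_r\le n$ and $\mu_1+\cdots+\mu_r\le\mu^{1+1/(r-1)}+r-1$. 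Applying $\sum_j\mu_j^{4/3}\le(\sum_j\mu_j)^{4/3}$ and using $\mu^{1/(r-1)}=1+O(1/\log t)$, the per-level blow-up factor is $(1+O(1/\log t))^{4/3}$, which compounds to a constant over the $O(\log t/\log\log t)$ recursion levels; handling the base case as in the previous proof, the recurrence solves to $T(n,t,\mu)=\OO(\mu^{4/3})$. The main (and essentially only) obstacle is this last bookkeeping: checking that the superadditivity inequality collapses the sums over dyadic intervals and over the recursion branches in the favorable direction, and that the geometric accumulation of the $(1+O(1/\log t))$ losses across levels stays $O(1)$; no new algorithmic idea beyond the clipped-sumset substitution is required.
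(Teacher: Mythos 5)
Your proposal follows essentially the same approach as the paper, which simply says to substitute Bringmann and Nakos' $\OO(|C|^{4/3})$-time deterministic clipped-sumset routine for Lemma~\ref{lem:clipped:sumset}, keep the rest of the algorithm unchanged, and ``straightforwardly modify the above analysis.'' Your write-up supplies the bookkeeping that the paper leaves implicit --- in particular the superadditivity inequality $\sum_i a_i^{4/3}\le\bigl(\sum_i a_i\bigr)^{4/3}$ used to aggregate the now-superlinear per-interval and per-branch costs via Observation~\ref{obs} and the recurrence --- and it is correct.
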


\mysection{Remarks}

In the $\OO(t)$-time algorithm in Section~\ref{sec:linear},
we have not tried to optimize (and thus have not written out) the number of $\log t$ factors.
It is larger than
the previous randomized algorithms (Bringmann's algorithm~\cite{Bringmann17} had 4 log factors, and
Jin and Wu's algorithm~\cite{JinW19} had 2), but is not gigantic (probably below 20).

In the output-sensitive algorithms in Section~\ref{sec:os},
we reiterate that the hidden extra factors are polylogarithmic in $n$ and $t$.
The previous randomized algorithms by Bringmann, Fischer, and Nakos~\cite{BringmannFN25} actually got factors that are
polylogarithmic in the output size $|\SUMS(X)\cap [t]|$ rather than in $t$; this seems harder to do from the derandomization perspective (but would not make a big difference except in some extreme cases).

Like previous work, our algorithms easily adapt to the setting when $X$ is a multiset---a subset of $X$ cannot use an element more times than its original multiplicity.  (Note that $n$ could be larger than $t$ here, and to avoid an
$O(n)$ term in the running time, we may need to assume that $X$ is represented as a list of distinct elements
with given multiplicities.)

In the $\OO(t)$-time algorithm in Section~\ref{sec:linear}, we generate canonical subsets by adapting
Koiliaris and Xu's algorithm~\cite{KoiliarisX19}, but in Section~\ref{sec:os}, we have realized that Koiliaris and Xu's algorithm
is not essential to our technique (though it is convenient, since it already incorporates dyadic intervals).
This realization may potentially help in future applications.

We hope that our technique may find many further applications in derandomizing other algorithms in this area, since a number of previous algorithms used randomized partitioning similar to halvers---to name just two examples, Chen, Lian, Mao, and Zhang's $\OO(n+ \sqrt{ut})$-time subset sum algorithm~\cite{ChenLMZ24} (where $u=\max(X)$) and Bringmann, D\"urr, and Polak's knapsack algorithm~\cite{BringmannDP24}.  We leave the question
of their derandomization to future work (these algorithms have other components that may pose additional challenges
to derandomize).

\PAPER{\myparagraph{Acknowledgement.}}
\SIAMV{\section*{Acknowledgement.}}
I thank Karl Bringmann for suggesting potential applications of the technique, including derandomization
of the output-sensitive subset sum algorithms.

{\small
\bibliographystyle{alphaurl}
\bibliography{det_subsetsum}
}

\appendix

\mysection{Reducing 0-1 Knapsack to Min-Plus Convolution}\label{sec:knapsack}

\newcommand{\MAX}{\mbox{\sc p-max}}
\newcommand{\R}{\mathbb{R}}

In the \emph{all-capacities} version of the \emph{0-1 knapsack problem},
we are given $n$ items where the $j$-th item has positive integer weight $w_j$ and positive real profit $p_j$, and we are given a number $t$.  For each $i\in [t]$, we want to compute $f[i]=\max\left\{\sum_{j\in J} p_j: \sum_{j\in J} w_j = i,\ 
J\subseteq\{1,\ldots,n\}\right\}$.

0-1 knapsack may be viewed as a two-dimensional variant of subset sum.
More precisely, we map each item to a point $(w_j,p_j)$ in two dimensions.
For a set of two-dimensional points $A\subseteq [t]\times\R^+$, define the operator $\MAX(A)=\left\{(w,p^*)\in A: \displaystyle p^*=\max_{(w,p)\in A}p\right\}$.
The problem is then equivalent to the following: given a set of $n$ two-dimensional points $X\subset [t]\times\R^+$,
compute $\MAX(\SUMS(X)\cap ([t]\times\R^+))$,
where $\SUMS(\cdot)$ is defined exactly as before (additions are now performed in two dimensions).
We may assume that $n\le t$ after an initial $O(n)$-time preprocessing (replacing $X$ with $\MAX(X\cap ([t]\times\R^+))$).

Cygan, Mucha, W\k{e}grzycki, and W\l{}odarczyk~\cite{CyganMWW19} and K\"unnemann, Paturi, and Schneider~\cite{KunnemannPS17} gave deterministic reductions from min-plus convolution to all-targets 0-1 knapsack.
In the reverse direction, Cygan et al.~\cite{CyganMWW19}
gave a \emph{randomized} reduction from all-targets 0-1 knapsack to min-plus convolution, 
by modifying Bringmann's randomized algorithm for subset sum~\cite{Bringmann17}.
We can obtain a deterministic reduction from all-targets 0-1 knapsack to min-plus convolution by modifying the algorithm in Section~\ref{sec:linear}.

The modification is mostly straightforward.
We mention some of the main changes:
\begin{itemize}
\item
Expressions such as $X\cap I$ should be reinterpreted as $X\cap (I\times\R^+)$.
\item
Most occurrences of $\SUMS_i(X)$ should be changed to $\MAX(\SUMS_i(X))$.
\item
Each sumset $A+B$ should be changed to $\MAX(A+B)$.
Given $A,B\subseteq [t]\times\R^+$, the computation of $\MAX(A+B)$ is equivalent to
max-plus convolution over two real vectors of length $O(t)$, which is equivalent to min-plus convolution by negation.
\item
In Definition~\ref{def:halver} on halvers, the property is now:
\begin{quote}
For every subset $S\subseteq X$ with $|S|\le k$ and $\SUM(S)\in \MAX(\SUMS_{\le k}(X))$, 
there exists a subset $\hatS\subseteq X$ with $|\hatS|\le k$, $\SUM(\hatS)=\SUM(S)$,
and $|\hatS\cap X'|, |\hatS\cap X''|\le |\hatS|/2 + \Delta$.
\end{quote}
\item
In Lemma~\ref{lem:collect} on canonical subset generation, the property is now:
\begin{quote}
For every subset $S\subseteq X$ with $\SUM(S)\in \MAX(\SUMS_{|S|}(X))$,
there exists a subset $\hatS\subseteq X$ such that $|\hatS|=|S|$, $\SUM(\hatS)=\SUM(S)$, and 
$\hatS$ is expressible as a union of $O((|S|/b + 1)\log u)$ disjoint canonical subsets in $\SSS$.
\end{quote}
Note that if $\SUM(S)\in \MAX(\SUMS_{|S|}(X))$, then for any interval $I$, we must also have
$\SUM(S\cap (I\times\R^+))\in \MAX_{|S\cap (I\times\R^+)|} (X\cap (I\times\R^+))$.
\item
In Lemma~\ref{lem:DC}, we are now computing a set $\ANS\subseteq [ku]\times\R^+$ such that
$\MAX(\SUMS_{\le k}(X))\subseteq\{(w,p'): p'\ge p,\ (w,p)\in\ANS\}$
and $\ANS\subseteq\SUMS(X)$.
\end{itemize}
The rest of the changes are easy to work out.

\begin{theorem}
If min-plus convolution for two real vectors of length $n$ could be computed in $T(n)$ deterministic time,
then the all-capacities version of the 0-1 knapsack problem could be solved in $\OO(T(t))$ deterministic time.
\end{theorem}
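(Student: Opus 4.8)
The plan is to mirror the structure of the $\OO(t)$-time algorithm from Section~\ref{sec:linear}, tracking profits alongside weights throughout. The backbone is the divide-and-conquer of Lemma~\ref{lem:DC}: we split $X$ into scale classes $X_i = X\cap((t/2^i,t/2^{i-1}]\times\R^+)$, handle each with a recursive halving scheme down to bucket sizes $k=2^i$ and value range $u=t/2^{i-1}$, and combine the $O(\log t)$ partial answers at the end. The only place randomization enters is the halver construction, and the only place real arithmetic (hence min-plus convolution) enters is in replacing each Boolean sumset step with a $\MAX(\cdot+\cdot)$ step, which is exactly a max-plus (hence min-plus) convolution on real vectors of length $O(\text{range})$. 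So the running time recurrence is identical to the one in Lemma~\ref{lem:DC} and its driving theorem, except that every $\OO(\text{length})$ convolution cost becomes $\OO(T(\text{length}))$; summing over scales gives $\OO(T(t))$ provided $T$ is at least quasi-linear (which it is, since min-plus convolution is conjectured near-quadratic and in any case $T(n)=\Omega(n)$), so the $O(\log t)$ scales and the divide-and-conquer recursion tree contribute only polylog overhead.

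First I would restate the halver definition with the $\MAX$-guarded quantifier shown in the excerpt: we only need the partition to behave well for subsets $S$ whose sum $\SUM(S)$ is Pareto-optimal among all $|S|$-element subsets. The point of this restriction is that it is preserved under intersection with any interval: if $\SUM(S)\in\MAX(\SUMS_{|S|}(X))$ then $\SUM(S\cap(I\times\R^+))$ is Pareto-optimal among $|S\cap(I\times\R^+)|$-element subsets of $X\cap(I\times\R^+)$ — otherwise we could swap in a better piece and contradict Pareto-optimality of $\SUM(S)$. This is what lets the canonical-subset argument of Lemma~\ref{lem:collect} go through verbatim: run the Koiliaris--Xu recursion of Lemma~\ref{lem:KX}, but with every $\SUMS_i(X\cap I)$ replaced by its $\MAX$, and every convolution replaced by a max-plus convolution; for each dyadic interval $I$, each $i\le b$, and each weight $w$ appearing in $\MAX(\SUMS_i(X\cap I))$ we record one witnessing subset $S_I[i,w]$. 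Decomposing an arbitrary Pareto-optimal $S$ over the dyadic intervals ${\cal I}(S)$ and gluing the corresponding canonical pieces yields $\hatS$ with $|\hatS|=|S|$ and $\SUM(\hatS)=\SUM(S)$, expressible as a union of $O((|S|/b+1)\log u)$ disjoint canonical subsets. Then Lemma~\ref{lem:discrep} (which is purely combinatorial and needs no change) gives the partition, and the error bookkeeping of Lemma~\ref{lem:halver} is unchanged.

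With the halver in hand, I would re-run the proof of Lemma~\ref{lem:DC} in the profit-annotated setting: the recursive call returns a set $\ANS\subseteq[ku]\times\R^+$ that dominates $\MAX(\SUMS_{\le k}(X))$ from above (i.e. every Pareto-optimal point of the true answer is dominated by some point of $\ANS$) and is itself contained in $\SUMS(X)$; the combine step computes $\MAX(\ANS'+\ANS'')$ clipped to $[ku]\times\R^+$ via one max-plus convolution of length $O(ku)$. Correctness of the combine step is exactly the argument in Lemma~\ref{lem:DC}: given Pareto-optimal $S$ with $\SUM(S)=y$, the halver supplies $\hatS$ with $|\hatS\cap X'|,|\hatS\cap X''|\le k/2+k/\log k$, both halves are Pareto-optimal in their sub-instances (by the intersection-closure observation above), so by induction $\SUM(\hatS\cap X')$ is dominated in $\ANS'$ and likewise for $X''$, hence $y$ is dominated in the combined output. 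The base case $k\le\log^{O(1)}u$ uses the $\MAX$-annotated Lemma~\ref{lem:KX}. Finally, stitching the scale classes $X_0,X_1,\ldots$ uses $O(\log t)$ more max-plus convolutions of length $O(t)$, and the claimed $\OO(T(t))$ bound follows.

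I expect the main obstacle to be the bookkeeping around Pareto-domination rather than any genuinely new idea: in the Boolean setting "$y\in\ANS$" is a clean membership statement, but here $\ANS$ stores a staircase of $(w,p)$ pairs, and one must be careful that (a) the recursively returned $\ANS$ really does dominate all of $\MAX(\SUMS_{\le k}(X))$ and not just $\MAX(\SUMS_{\le k})$ restricted to Pareto-optimal sums encountered along one particular decomposition, and (b) that max-plus convolution of two upper-staircases again yields something dominating the true clipped max-sum-set. Both are routine once phrased correctly — taking $\MAX$ at every step keeps the vectors of length $O(\text{range})$ so the convolution cost stays at $T(\text{range})$ — but they are the details that need the most care to state without a gap. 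Everything else (the recurrence solution, the scale-class decomposition, the $n\le t$ preprocessing via an initial $\MAX(X\cap([t]\times\R^+))$) transfers from Section~\ref{sec:linear} with only cosmetic changes, and the witness-finding steps carry over because Alon--Naor's technique applies to the convolutions here just as it did before (or one simply stores witnesses explicitly, which costs only polylog extra since the staircases have length $O(\text{range})$).
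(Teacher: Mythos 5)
Your proposal is correct and follows essentially the same approach as the paper's Appendix~A: reinterpret subset sum in two dimensions (weight, profit), replace $\SUMS_i$ by $\MAX(\SUMS_i)$ and Boolean sumsets by max-plus convolutions, restrict the halver and canonical-subset guarantees to Pareto-optimal sums (using exactly the same intersection-closure observation), and re-run the divide-and-conquer of Lemma~\ref{lem:DC} with the modified invariant that $\ANS$ dominates $\MAX(\SUMS_{\le k}(X))$ from above while remaining inside $\SUMS(X)$. The only difference is cosmetic: you spell out some of the Pareto-domination bookkeeping that the paper leaves as ``easy to work out.''
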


\end{document}